\def\?[#1]{\textbf{[#1]}\marginpar{\Large{\textbf{??}}}} 
\let\epsilon=\varepsilon 
\newcommand{\RR}{{\mathbb R}}
\newcommand{\NN}{{\mathbb N}}
\newcommand{\CC}{{\mathbb C}}
\newcommand{\diag}{\mathrm{diag}}
\newcommand{\ZZ}{{\mathbb Z}}
\newcommand{\bt}{\mathbf{t}}
\newcommand{\Hn}{H_n ( \alpha; \textbf{t})}
\newcommand{\Hkn}{H_{n,\mathbf{k}} ( \alpha; \textbf{t})}
\newcommand{\Dn}{D_n ( \alpha; \textbf{t})}
\newcommand{\kk}{\mathbf{k}}
\newcommand{\bz}{\bar{z}}
\newcommand{\Dnn}{D_n ( 0; \textbf{t})}
\newcommand{\bo}{\mathbf{0}}
\newcommand{\bi}{\mathbf{i}}
\newcommand{\be}{\mathbf{e}}
\newtheorem{theo}{Theorem}
\newtheorem{prop}{Proposition}[section]
\newtheorem{lemm}[prop]{Lemma}
\newtheorem{rem}{Remark}
\numberwithin{equation}{section}
\DeclareMathOperator{\Spec}{Spec}
\let\Im=\Imag
\DeclareMathOperator{\vol}{vol}
\DeclareMathOperator{\id}{\text{Id}}
\newcommand\reallywidehat[1]{\arraycolsep=0pt\relax%
\begin{array}{c}
\stretchto{
  \scaleto{
    \scalerel*[\widthof{\ensuremath{#1}}]{\kern-.5pt\bigwedge\kern-.5pt}
    {\rule[-\textheight/2]{1ex}{\textheight}} 
  }{\textheight} %
}{0.5ex}\\           
#1\\                 
\rule{-1ex}{0ex}
\end{array}
}
\title[Twisted multilayer graphene]{Flat Bands and High Chern Numbers in Twisted Multilayer Graphene} 
\author{Mengxuan Yang}
\email{mxyang@math.berkeley.edu}
\address{Department of Mathematics, University of California, Berkeley, CA 94720, USA.}
\begin{document}

\begin{abstract}
Motivated by recent Physical Review Letters of Wang--Liu \cite{wang2022hierarchy} and Ledwith--Vishwanath--Khalaf  \cite{ledwith2022family}, we study Tarnopolsky--Kruchkov--Vishwanath \cite{tarnopolsky2019origin} chiral model of two sheets of $n$-layer Bernal stacked graphene twisted by a small angle using the framework developed by Becker--Embree--Wittsten--Zworski \cite{becker2020mathematics}. We show that magic angles of this model are exactly the same as magic angles of chiral twisted bilayer graphene with multiplicity. For small inter-layer tunneling potentials, we compute the band separation at Dirac points as we turning on the tunneling parameter. Flat band eigenfunctions are also constructed using a new theta function argument and this yields a complex line bundle with the Chern number $-n$. 
\end{abstract}

\maketitle

\section{Introduction}
When two or more sheets of graphene are stacked on top of each other and twisted, it has been observed that at certain angles, namely \emph{magic angles}, the zero energy band becomes flat and the system becomes superconducting. 

In this paper, we consider the chiral model \cite{san2012non} \cite{tarnopolsky2019origin} of two sheets of $n$-layer Bernal stacked graphene twisted by a small angle \cite{wang2022hierarchy}. The twisted multilayer graphene (TMG) Hamiltonian is given by 
\begin{equation}
  \label{eq:defH}    
  H_n ( \alpha; \bt)  := \begin{pmatrix} 0 & D^*_n(\alpha; \mathbf{t}) \\
  D_n ( \alpha; \mathbf{t}) &  0 \end{pmatrix}    
\end{equation} 
where 
\begin{equation}
  \label{eq:defD}
  D_n ( \alpha; \bt ) := \left(\begin{matrix} D (\alpha) & t_1T_+ & \\ t_1T_- & D(0) & t_2T_+ & \\ & t_2T_- & D(0) & \ddots \\ && \ddots && \\ &&&& t_{n-1}T_+ \\ &&&t_{n-1}T_- & D(0)\end{matrix}\right),
\end{equation}
with $\bt=(t_1, t_2, \cdots, t_{n-1})$ and 
\begin{equation}
  \label{eq:defD1}
  D( \alpha ) := \begin{pmatrix} { 2 }  D_{\bar z}  &   \alpha U(z) \\  
    \alpha U(-z) & { 2 } D_{\bar z }   \end{pmatrix}, \ \ 
    T_+ = \left(\begin{matrix}1&0\\0&0\end{matrix}\right), \ \ 
    T_- = \left(\begin{matrix}0&0\\0&1\end{matrix}\right).
\end{equation}
In particular, $ z = x_1 + i x_2$, $ D_{\bar z } := \tfrac1{2 i } ( \partial_{x_1} + 
i \partial_{x_2} )$ and 
\begin{equation}
\label{eq:defU} 
U (z) = U (z,\bar z ) := \sum_{ k = 0}^2 \omega^k e^{ \frac12 (  z  
\bar \omega^k - \bar z  \omega^k )} , \ \
\omega := e^{ 2 \pi i /3 }. 
\end{equation}
In equation \eqref{eq:defD}, $t_kT_+$ (resp.\ $t_kT_-$) denotes the tunneling between the top (resp.\ the bottom) $k$-th layer and $(k+1)$-th layer. Without loss of generality we can assume that all $t_i$'s are non-zero, since otherwise we have direct sum decomposition of $\Dn$ into smaller matrix blocks.

In the corresponding physics model, when two honeycomb lattices are twisted against each another, a periodic honeycomb superlattice, called the moir\'e lattice, is formed. Bistritzer--MacDonald \cite{bistritzer2011moire} predicted that the symmetries of the periodic moir\'e lattice in twisted bilayer graphene (TBG) lead to dramatic flattening of the band spectrum. The chiral model of TBG \eqref{eq:H2} was obtained by Tarnopolsky--Kruchkov--Vishwanath \cite{tarnopolsky2019origin} by removing certain interaction terms from the operator constructed in \cite{bistritzer2011moire}; it was recently studied in greater mathematical details by the work of Becker \emph{et al.} \cite{becker2020mathematics} and Becker--Humbert--Zworski \cite{becker2022fine} \cite{becker2022integrability}. Recently, Wang--Liu \cite{wang2022hierarchy} and Ledwith--Vishwanath--Khalaf \cite{ledwith2022family} generalized the chiral TBG model to chiral TMG models which also have the ideal quantum geometry and support high Chern number band, 
\begin{rem}
For $n=1$, the operator \eqref{eq:defH} reduces to the chiral twisted bilayer (TBG) model
\begin{equation}
\label{eq:H2}
\Hn = H(\alpha) = 
\begin{pmatrix} 0 & D^*(\alpha) \\
  D ( \alpha) &  0 
  \end{pmatrix}
\end{equation}
independent of $t$, see \cite{tarnopolsky2019origin} \cite{becker2020mathematics} \cite{becker2022fine} \cite{becker2022integrability} for more details. The dimensionless parameter $\alpha$ in \eqref{eq:defH} and \eqref{eq:H2} is essentially the reciprocal of the twisting angle.
\end{rem}

Note that the potential \eqref{eq:defU} satisfies the following properties:
\begin{equation}
\label{eq:symmU} 
\begin{gathered}  
U ( z + \mathbf a ) = \bar \omega^{a_1 + a_2}  U ( z ) , \ \ 
\mathbf a = \tfrac{4}3 \pi i \omega (  a_1 + \omega a_2 ) ,  \ \ a_j \in \mathbb Z , 
\\ U ( \omega z )  = \omega U ( z )   , \ \  \overline { U ( \bar z ) } = U ( z ) . 
\end{gathered}
\end{equation}
In particular, it is periodic with respect to the lattice $3\Lambda$ with \begin{equation}
\Lambda := \{ \tfrac{4}{3} \pi i \omega( a_1 + \omega a_2 ) : a_1,a_2 \in \mathbb Z  \}.
\end{equation}
The dual lattice $\tfrac{1}{3}\Lambda^*$ of $3\Lambda$ consists of $ \mathbf k \in\CC $ satisfying
\begin{gather} \langle \gamma,\kk \rangle := \tfrac12 (\gamma \bar{\mathbf k } + \bar \gamma \mathbf k ) \in 2 \pi \ZZ , \ \  \gamma\in 3\Lambda \Longrightarrow
 \Lambda^* = {\sqrt{3}} \omega \left(
  \ZZ  \oplus  {\omega}  \ZZ \right).
\end{gather}
We consider a generalized Floquet condition under the modified translation operator $\mathscr L_{\mathbf{a}}$, with $\mathbf{a}\in \Lambda$ (defined in equation \eqref{eq:defLa}), i.e., we study the spectrum of $ \Hn $ satisfying the following boundary conditions:
\begin{equation}
\label{eq:FL_ev}  
\Hn \mathbf u = E(\alpha,\kk) \mathbf u , \ \  \mathscr L_{\mathbf a } \mathbf u ( z ) = e^{ i \langle  \mathbf{a}, \mathbf k 
\rangle}  \mathbf u ( z ) , \ \ \mathbf k \in \mathbb C . 
\end{equation}
The spectrum is symmetric with respect to the origin due to chiral symmetry and we index it as
\begin{equation}
\label{eq:eigs} 
\begin{gathered} \{ E_{ j } ( \alpha, \mathbf k ) \}_{ j \in  \mathbb Z^* } ,  \ \  E_{j } ( \alpha, \mathbf k ) = - E_{-j} ( \alpha , \mathbf k ) ,\ \ \mathbb Z^*:= \ZZ\backslash \{0\} 
\\ 0 \leq E_1 ( \alpha, \mathbf k ) \leq E_2 ( \alpha, \mathbf k ) \leq \cdots , \ \  E_1 ( \alpha, \mathbf 0 ) =  E_1 ( \alpha, -\mathbf i ) = 0 , 
\end{gathered} \end{equation}
see \S \ref{sec:flo2} for details. The points $ \mathbf 0 , -\mathbf i $
are called the {\em Dirac points} and are typically denoted by $ K $ and $ K' $ in the physics literature. 

For $E( \alpha , \mathbf k )$ satisfies equation \eqref{eq:FL_ev} of $\Hn$, we define the following set of {\em complex numbers}:
\begin{equation}
\label{eq:defA}  \mathcal A_{n} := \{ \alpha \in \mathbb C : \forall \, \mathbf k \in \mathbb C, \ \
E_1 ( \alpha, \mathbf k ) \equiv 0  \} 
\end{equation}
The {\em magic angles} are essentially the reciprocals of $ \alpha$'s. And $\alpha\in \mathcal{A}$ is call {\em simple} if $E_2 ( \alpha, \mathbf k ) >0$ and {\em multiplicity} of $\alpha\in \mathcal{A}$ is defined to be the number of $j$'s ($j>0$) such that $E_j ( \alpha, \mathbf k ) \equiv 0$.

Our first theorem states that the magic angles of the TMG Hamiltonian $\Hn$ coincides with magic angles of the TBG Hamiltonian $H(\alpha)$.
\begin{theo}
\label{thm:1}
The set $\mathcal{A}_n$ is independent of $n\in\NN_+$. More precisely, the multiplicity of each $\alpha\in \mathcal{A}_n$ is independent of $n$.
\end{theo}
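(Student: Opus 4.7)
The plan is to exploit the block--triangular structure of $D_n(\alpha;\mathbf{t})$. Writing a kernel element as $\Psi=(u_1,v_1,u_2,v_2,\ldots,u_n,v_n)$ and using $T_+=\mathrm{diag}(1,0)$, $T_-=\mathrm{diag}(0,1)$, the equation $D_n(\alpha;\mathbf{t})\Psi=0$ decouples row by row into a ``$u$-cascade''
\[ 2D_{\bar z}u_n=0,\qquad 2D_{\bar z}u_k=-t_k u_{k+1}\ \ (2\leq k\leq n-1),\qquad 2D_{\bar z}u_1+\alpha U(z)v_1=-t_1 u_2, \]
and a ``$v$-cascade''
\[ 2D_{\bar z}v_1=-\alpha U(-z)u_1,\qquad 2D_{\bar z}v_k=-t_{k-1}v_{k-1}\ \ (k\geq 2). \]
The pair $(u_1,v_1)$ plays the role of the TBG unknown; all other components are driven by it through elementary ``integration'' against $D_{\bar z}$.

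Next, I would construct, at every non-Dirac Floquet parameter $\mathbf{k}$, a linear isomorphism $\Phi_{\mathbf{k}}:\ker D_n(\alpha;\mathbf{t})\to\ker D(\alpha)$ by $\Phi_{\mathbf{k}}(\Psi)=(u_1,v_1)$. At such $\mathbf{k}$ the operator $D_{\bar z}$ on the associated Floquet line bundle over $\mathbb{C}/3\Lambda$ is invertible --- both its kernel and cokernel vanish, by the Riemann--Roch type computation recorded in \cite{becker2020mathematics}. Starting from $2D_{\bar z}u_n=0$, invertibility forces $u_n=0$, then $u_{n-1}=0$, and inductively $u_k=0$ for all $k\geq 2$. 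The first-row equation then reduces to $2D_{\bar z}u_1+\alpha U(z)v_1=0$, which combined with $\alpha U(-z)u_1+2D_{\bar z}v_1=0$ is exactly $D(\alpha)(u_1,v_1)^{T}=0$; this gives well-definedness and injectivity of $\Phi_{\mathbf{k}}$ (the triviality of $v_k$ for $k\geq 2$ when $v_1=0$ again uses invertibility of $D_{\bar z}$). For surjectivity, given $(u_1,v_1)\in\ker D(\alpha)$, I set $u_k:=0$ for $k\geq 2$ and define $v_k:=-\tfrac{t_{k-1}}{2}D_{\bar z}^{-1}v_{k-1}$ recursively; every equation of both cascades is then satisfied by construction.

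The last step is to translate this $\mathbf{k}$-wise bijection into the statement about $\mathcal{A}_n$ and its multiplicities. Continuity of the bands $E_j(\alpha,\mathbf{k})$ in $\mathbf{k}$ implies that the number of $j$'s with $E_j(\alpha,\cdot)\equiv 0$ equals $\dim\ker D_n(\alpha;\mathbf{t})$ at any single non-Dirac $\mathbf{k}$, and analogously for $\mathcal{A}_1$. Combining with $\Phi_{\mathbf{k}}$ yields $\alpha\in\mathcal{A}_n\iff\alpha\in\mathcal{A}_1$ with equal multiplicity, independently of $n\in\mathbb{N}_+$ and of the choice of nonzero tunneling vector $\mathbf{t}$.

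The main obstacle is analytic rather than algebraic: one must justify the recursion $v_k=-\tfrac{t_{k-1}}{2}D_{\bar z}^{-1}v_{k-1}$ inside the correct Hilbert-space framework (local $L^2$ Floquet sections of the line bundle over $\mathbb{C}/3\Lambda$) and identify precisely the $\mathbf{k}$ at which $D_{\bar z}$ fails to be invertible --- these are exactly the Dirac points $\mathbf{0},-\mathbf{i}$ singled out in \cite{becker2020mathematics}. Once that analytic setup is in place the cascade argument runs essentially automatically, and the theorem becomes a purely structural statement: the tunneling layers contribute only trivially solvable inhomogeneities on top of the TBG operator, and therefore cannot create or destroy flat bands.
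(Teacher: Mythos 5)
Your cascade argument is correct and amounts to the same idea as the paper's proof: the paper simply packages the forward/backward substitution you carry out explicitly as a block LU-type factorization
\[
D_n(\alpha;\mathbf{t})-\mathbf{k} = (\mathrm{Id}+J_{\mathbf{t},+})\,\mathrm{diag}\bigl(D(\alpha)-\mathbf{k},\,D(0)-\mathbf{k},\ldots,D(0)-\mathbf{k}\bigr)\,(\mathrm{Id}+J_{\mathbf{t},-}),
\]
valid because $T_+R_0(\mathbf{k})T_-=0$, and the invertibility of $D(0)-\mathbf{k}$ for $\mathbf{k}\notin\Lambda^*+\{\mathbf{0},\mathbf{i}\}$ plays exactly the role your invertibility of $2D_{\bar z}$ off the Dirac lattice does. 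Both routes identify $\ker(D_n(\alpha;\mathbf{t})+\mathbf{k})$ with $\ker(D(\alpha)+\mathbf{k})$ at every non-Dirac $\mathbf{k}$ and hence yield the same magic-angle set with the same multiplicities.
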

Therefore we can define $\mathcal{A}:= \mathcal{A}_n$. This result is essentially contained in the paper \cite{wang2022hierarchy}. A spectral theory characterization of the set $\mathcal{A}_1$ (for TBG) was given in \cite{becker2020mathematics} using a Birman--Schwinger operator whose spectrum is given by $1/\mathcal{A}_1$. 

Our second result studies the role of the tunneling parameter $\bt$ for $\alpha\in\mathcal{A}$. When $\bt=\mathbf{0}$, the operator $\Dn$ can be decomposed into direct sums of one operator $D(\alpha)$ and $(n-1)$ operators $D(0)$, where the latter give rise to Dirac cones independent of $\alpha\in\CC$. The Dirac cones are connected to the band $E_1(\alpha,\kk)$ for any $\alpha\in\CC$. In particular, for $\alpha\in\mathcal{A}$, turning on the tunneling parameter $\bt$ such that $t_i> 0$ for $1\leq i\leq n-1$ results in these Dirac cones separate from the flat band and a band gap is formed. The band separation is studied for $H_{2}(\alpha;t)$: 
\begin{theo}
\label{thm:2}
Assume $\alpha\in\mathcal{A}$ is simple. For $t\ll 1$ and at $\kk=K'$, the first two eigenvalues of $H_{2}(\alpha;t)$ are given by $E_1(\alpha,\kk,\bt)=0$ and $E_2(\alpha,\kk,\bt)=Ct +\mathcal{O}(t^{3/2})$.
\end{theo}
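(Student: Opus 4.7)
The plan is to perform degenerate Rayleigh--Schr\"odinger perturbation theory for the self-adjoint operator $H_{2}(\alpha;t)$ at $t=0$ and $\kk=K'=-\bi$, using Theorem~\ref{thm:1} to pin down the structure of the effective matrix. At $t=0$ the operator $D_{2}(\alpha;0)=D(\alpha)\oplus D(0)$; the simplicity hypothesis on $\alpha\in\mathcal{A}$ gives a one-dimensional $\ker D(\alpha)|_{K'}$ spanned by the flat-band eigenfunction $\phi$, while the Dirac point at $K'$ produces a one-dimensional $\ker D(0)|_{K'}$ spanned by an essentially plane-wave spinor $\eta$. Together with cokernel vectors $\psi\in\ker D(\alpha)^{*}|_{K'}$ and $\chi\in\ker D(0)^{*}|_{K'}$, this yields a four-dimensional $\ker H_{2}(\alpha;0)|_{K'}=\ker D_{2}\oplus\ker D_{2}^{*}$, organized as $\{(\phi,0),(0,\eta)\}\oplus\{(\psi,0),(0,\chi)\}$.

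Writing $H_{2}(\alpha;t)=H_{2}(\alpha;0)+tV$ with $V$ chirally off-diagonal and $V_{2}=\bigl(\begin{smallmatrix}0 & T_{+}\\T_{-} & 0\end{smallmatrix}\bigr)$, analytic perturbation theory (Kato--Rellich) produces four eigenvalue branches near zero whose leading terms are the eigenvalues of the $4\times 4$ matrix $PVP$, with $P$ the spectral projection onto the unperturbed kernel. A direct computation using $V_{2}(\phi,0)=(0,T_{-}\phi)$ and $V_{2}(0,\eta)=(T_{+}\eta,0)$ shows
\[
PVP\;\cong\;\begin{pmatrix}0 & A^{*}\\A & 0\end{pmatrix},\qquad A=\begin{pmatrix}0 & a\\ b & 0\end{pmatrix},
\]
with $a=\langle \psi,T_{+}\eta\rangle$ and $b=\langle \chi,T_{-}\phi\rangle$; the anti-diagonal form of $A$ is forced by the fact that the projectors $T_{\pm}$ only couple opposite sublattice components across the two blocks of $D_{2}$, so the flat-band--flat-band and Dirac--Dirac matrix elements vanish identically.

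Theorem~\ref{thm:1} now intervenes: it implies that the flat band of $H_{2}(\alpha;t)$ at energy zero persists for every $t$, so $\dim\ker H_{2}(\alpha;t)|_{K'}\ge 2$ for all small $t$. Therefore $PVP$ must have $0$ as an eigenvalue of multiplicity at least two; equivalently $\det A=-ab=0$, i.e.\ one of $a,b$ vanishes. Setting $C=\max(|a|,|b|)$, the four leading eigenvalues of $H_{2}(\alpha;t)$ near zero are $0,0,\pm C\,t$, and non-degenerate analytic perturbation theory applied to the simple eigenvalues $\pm C$ of $PVP$ yields $E_{2}(\alpha,K',t)=C\,t+\mathcal{O}(t^{3/2})$.

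The main technical obstacle is to show that the surviving overlap is actually nonzero, so that $C>0$. This requires the explicit theta-function form of the magic-angle flat band $\phi$ (as in the construction sketched in the abstract) together with the sublattice structure of $\eta,\psi,\chi$ at $K'$: the expected picture is that $\phi$ is sublattice-polarized on the $D(\alpha)$-block, making one of $a,b$ manifestly zero, while the other reduces to an explicit, nonvanishing integral of the flat-band spinor against a Dirac plane wave over a fundamental domain of $3\Lambda$. The $\mathcal{O}(t^{3/2})$ remainder (weaker than the $\mathcal{O}(t^{2})$ that analytic perturbation theory would give) can be recovered from a Schur-complement / Grushin expansion of $(H_{2}(\alpha;t)-E)^{-1}$ around the unperturbed kernel and requires no additional analytic input beyond the resolvent bounds already established for $H(\alpha)$ at magic angles.
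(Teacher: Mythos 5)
Your proof is correct and takes a genuinely different route from the paper. The paper squares the operator and works with a $2\times 2$ Grushin problem for $P^{t}_{\kk}=D_{2,\kk}^{*}(\alpha;t)D_{2,\kk}(\alpha;t)$; there the first-order term in $t$ vanishes identically, and the band gap appears as a second-order coefficient extracted by an explicit resolvent and Plancherel computation. You instead apply first-order degenerate (Kato--Rellich) perturbation theory directly to $H_{2}(\alpha;t)$ on the four-dimensional unperturbed kernel, exploiting the chiral block structure: $PVP$ is anti-block-diagonal with off-block $A=\bigl(\begin{smallmatrix}0 & a\\ b & 0\end{smallmatrix}\bigr)$, so its eigenvalues are $\pm|a|,\pm|b|$ and the linear rate drops out immediately. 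The cleanest feature of your argument is the use of flat-band protection to force $\det A=-ab=0$ without computing either overlap: this replaces the paper's non-obvious cancellation (the Plancherel identity making the $(1,1)$ entry of the second-order matrix vanish). One can check the two constants agree: with $\{v_{i}=D_{\kk}(\alpha)u^{(1)}_{i}/\sqrt{z^{(1)}_{i}}\}$ an orthonormal basis of $(\ker D_{\kk}^{*}(\alpha))^{\perp}$, the paper's $C^{2}=\|T_{+}e_{1}\|^{2}-\sum_{i}|\langle T_{+}e_{1},v_{i}\rangle|^{2}=\|\operatorname{proj}_{\ker D_{\kk}^{*}(\alpha)}T_{+}e_{1}\|^{2}=|\langle\psi,T_{+}\eta\rangle|^{2}=|a|^{2}$, matching your $C$.

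Two small points worth tightening. First, the flat-band persistence you need is $\dim\ker H_{2,K'}(\alpha;t)\ge 2$ for all small $t$, and that comes from Proposition~\ref{prop:prot} (protected kernel of $D_{2}$) together with the Fredholm index-zero argument giving $\dim\ker D_{2}^{*}$; Theorem~\ref{thm:1} alone is about the $n$-independence of the magic-angle set and doesn't directly give the claim at the specific point $K'$, so you should cite both. Second, your argument shows only that one of $a,b$ vanishes, not which; for Theorem~\ref{thm:2} as stated (which merely asserts $E_{2}=Ct+\mathcal{O}(t^{3/2})$ for some constant) this is sufficient, but to reproduce the explicit constant of Proposition~\ref{prop:sep} you would need to identify $b=\langle\chi,T_{-}\phi\rangle=0$ (the flat-band block does not lift), which the paper obtains from the Plancherel cancellation and which your protection argument alone cannot disambiguate. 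Neither you nor the paper actually prove $C>0$; both leave that to the numerics of Table~\ref{tab:int}. Your $\mathcal{O}(t^{3/2})$ remainder is conservative: for a simple eigenvalue of $PVP$ the analytic perturbation theory already gives $\mathcal{O}(t^{2})$.
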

\begin{rem}
The constant $C$ is computed in Proposition \ref{prop:sep}. The same result is expect to be true for $\Hn$. We only present the case $n=2$ here for notatinal simplicity purposes. This also gives an explicit result for the band separation mechanism discussed in \cite{wang2022hierarchy}.
\end{rem}

Using a new theta function argument and the resolvent formula for $(2 D_{\bar{z}}-\kk)^{-1}$, for $\alpha\in\mathcal{A}$ simple, we construct flat band eigenfunctions of $\Hn$. This gives rise to a holomorphic line bundle $L$ (see discussions in \S \ref{sec:chern} for more details).
\begin{theo}
\label{thm:3}
The Chern number of the line bundle L is given by $c_1(L)=-n$.
\end{theo}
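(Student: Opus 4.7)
\textbf{Proof proposal for Theorem~\ref{thm:3}.}
The plan is to construct an explicit global holomorphic section of the flat-band line bundle $L$ over the Brillouin torus $\mathbb{C}/\Lambda^*$ as a product of $n$ Jacobi theta functions times a nonvanishing entire vector-valued base, and then to read off $c_1(L)=-n$ as the degree of its divisor. By Theorem~\ref{thm:1} and simplicity of $\alpha$, I would start from the unique (up to scalar) TBG flat-band eigenfunction $\varphi_0(z)\in\ker D(\alpha)$ at Floquet twist $\mathbf{k}=0$ available from \cite{becker2020mathematics}; it has a single simple zero in a fundamental domain of $3\Lambda$.

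To extend $\varphi_0$ to a flat-band eigenvector $\mathbf{u}_{\mathbf{k}}$ of the full multilayer operator $H_n(\alpha;\mathbf{t})$ at arbitrary $\mathbf{k}$, I would solve the block-tridiagonal system $D_n(\alpha;\mathbf{t})\mathbf{u}=0$ layer by layer (after a standard gauge transformation that absorbs the Floquet twist into the scalar operator $2D_{\bar z}-\mathbf{k}$ on the periodic torus $\mathbb{C}/3\Lambda$). A $\mathbf{k}$-translated multiple of $\varphi_0$ is placed in the top $2\times 1$ block; the tunneling term $t_1 T_-\varphi_0$ then drives the next block via a scalar equation of the form $(2D_{\bar z}-\mathbf{k})\psi = (\text{source})$, which is inverted using the explicit Green's kernel of $(2D_{\bar z}-\mathbf{k})^{-1}$ on the torus, itself an elementary ratio of $\vartheta_1$'s. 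Iterating through all $n$ diagonal blocks of \eqref{eq:defD} yields
\begin{equation*}
\mathbf{u}_{\mathbf{k}}(z) \;=\; \vartheta_1\bigl(z - z_{*}(\mathbf{k})\bigr)\,\prod_{j=1}^{n-1}\vartheta_1\bigl(z - z_j(\mathbf{k})\bigr)\;\cdot\;\mathbf{v}(z,\mathbf{k}),
\end{equation*}
where $\mathbf{v}$ is entire in $z$, holomorphic in $\mathbf{k}$ and nonvanishing. Here the first factor records the original TBG zero of $\varphi_0$, while the remaining $n-1$ are the ``new'' theta functions introduced by each resolvent inversion, and this is the theta-function argument alluded to just before the theorem.

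The Chern number then follows: $\mathbf{u}_{\mathbf{k}}$ is a global holomorphic section of $L$ whose divisor on the $z$-torus (for each fixed $\mathbf{k}$) is simple and consists of exactly $n$ points varying holomorphically with $\mathbf{k}$. By the degree-of-divisor interpretation of $c_1$ for holomorphic line bundles, applied with the orientation fixed by the $n=1$ case of \cite{becker2020mathematics}, we obtain $c_1(L)=-n$. Equivalently, one may compute the winding of the quasi-periodic automorphy factor of $\mathbf{u}_{\mathbf{k}}$ under $\mathbf{k}\mapsto \mathbf{k}+\gamma$ for $\gamma\in\Lambda^*$; each of the $n$ $\vartheta_1$ factors contributes one unit by the standard quasi-periodicity $\vartheta_1(w+\lambda;\tau)=e^{a_\lambda w+b_\lambda}\vartheta_1(w;\tau)$. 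The main obstacle is the recursive construction itself: at every step one must verify that the source lies in the orthogonal complement of the one-dimensional kernel of $2D_{\bar z}-\mathbf{k}$ on the Floquet torus, that the successive shifts $z_j(\mathbf{k})$ are mutually distinct and depend holomorphically on $\mathbf{k}$ away from a discrete set, and that no cancellation occurs between the theta prefactors and the base $\mathbf{v}$; the simplicity hypothesis on $\alpha$ together with the explicit form \eqref{eq:defU} of $U$ are what rule out these degeneracies and deliver a genuine line bundle of rank one.
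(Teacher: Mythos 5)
The high-level strategy you describe — build the flat-band eigenvector of $D_n(\alpha;\mathbf{t})$ layer by layer from the TBG eigenfunction, inverting $(2D_{\bar z}+\mathbf{k})$ at each tunneling step, then read off the Chern number from a quasi-periodic automorphy factor in $\mathbf{k}$ — is in the same spirit as the paper. But your central structural claim does not hold, and two essential ingredients are missing.

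You assert that the eigenvector factorizes as a product of $n$ Jacobi theta functions in $z$:
\begin{equation*}
\mathbf{u}_{\mathbf{k}}(z) \;=\; \vartheta_1\bigl(z - z_{*}(\mathbf{k})\bigr)\,\prod_{j=1}^{n-1}\vartheta_1\bigl(z - z_j(\mathbf{k})\bigr)\;\cdot\;\mathbf{v}(z,\mathbf{k}).
\end{equation*}
This is not what the recursion produces. The resolvent $(2D_{\bar z}+\mathbf{k})^{-1}$ acts by \emph{convolution} against the kernel $c_k F_{\mathbf{k}}(z-z')$ (equation~\eqref{eq:res_ker}), so the lower blocks $\phi_{2m+2,\kk}$ of $\Phi_{\kk}$ are integrals, not pointwise products of theta functions. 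The unique theta factor in $z$ appears only in the top block through $F_{\mathbf{k}}(z+z_S)\mathbf{u}(z)$; no new multiplicative theta factors in $z$ are generated at the subsequent steps. So the divisor-in-$z$ picture that underlies your count of $n$ is unsupported, and in any case $L$ is a bundle over the $\mathbf{k}$-torus $\CC/\Lambda^*$, so its Chern number must be read from data in $\mathbf{k}$, not from zeros in $z$ at fixed $\mathbf{k}$.

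The ingredient that actually delivers the factor of $n$ is the normalization $\Theta_{n-1}(\mathbf{k}):=[\theta(\mathbf{k})]^{\,n-1}$ placed in front of the top block (equation~\eqref{eq:Dn_eigenf}). Without it, $(2D_{\bar z}+\mathbf{k})^{-m}$ contributes a pole of order $m$ at every $\mathbf{k}\in\Lambda^*$, and $\Phi_{\kk}$ would not be holomorphic on $\CC/\Lambda^*$ — the line bundle $L$ would not even be well-defined. It is precisely the $(n-1)$ copies of $\theta(\mathbf{k})$ in $\Theta_{n-1}$, together with the one copy of $F_{\mathbf{k}}$, that yield the multiplier $e^{(n)}_p(k)=[e^{(1)}_p(k)]^n$ in \eqref{eq:defep}; the iterated resolvent only propagates the unitary conjugation $\tau(p)$ and adds nothing to the multiplier. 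The paper then computes $c_1(L)=-n$ directly from this multiplier in \eqref{eq:Chern}. Finally, because $c_1(L)<0$, $\Phi_{\kk}$ is a holomorphic \emph{quasi-section} (with nontrivial automorphy factor), not a global section with zeros; one must in addition verify that the regularized eigenvector does not vanish at the lattice points, which requires $\int_{\CC/\Lambda}u_2\,dm\neq 0$ (Table~\ref{tab:int}) — a nondegeneracy condition your proposal does not address.
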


This justifies the high Chern number observation made in \cite{wang2022hierarchy} and it relies on a numerical calculation of the integration of the flat band eigenfucntions (see Table \ref{tab:int}). In \cite{wang2022origin}, the flat band eigenfunctions are considered as a section of rank $n$ holomorphic vector bundle. Our new theta function argument gives an explicit analytic construction and answers a question proposed in \cite{liu2022recent}.

We conclude our introduction by discussing its relation to physics:

\noindent
\textbf{Integer quantum Hall effect.}
For two dimensional electron gas in cold temperature, Hall conductance exhibits some plateaus at integer multiples of $e^2/2\pi \hbar$ as the magnetic field varies. It turns out that the Hall conductance at the plateaus is related (via the Berry curvature) to Chern number $c_1$ of the underlying Hermitian line bundles arising from ground state eigenfunctions. By Kubo's formula, the Hall conductance is given by
$$\sigma = -\frac{e^2}{2\pi\hbar}c_1$$
The higher Chern number in the TMG model can yield higher Hall conductance in this case.

\noindent
\textbf{Ideal Quantum Geometry and Fractional Chern insulators.}
Another observation (cf.\,\cite{wang2022hierarchy}) is that flat band eigenfunctions constructed in \eqref{eq:Dn_eigenf} are holomorphic in $\kk\in\CC$. Such flat band is said to satisfy the \emph{ideal quantum geometry} in physics literature (cf.\,\cite{claassen2015position}\cite{ozawa2021relations}\cite{mera2021kahler} \cite{wang2021exact}), in the sense that the Berry curvature $\Omega(\kk)$ on the fundamental domain $\CC/\Lambda^*$ is non-vanishing and proportional to a Fubini--Study metric $g_{\kk}$ on $\CC/\Lambda^*$ induced by the flat band eigenfunctions.

Such flat band was predicted and verified to have Fractional Chern Insulators (FCI) \cite{PhysRevX.1.021014} \cite{bergholtz2013topological}, which are lattice generalizations of the conventional fractional quantum Hall effect in two-dimensional electron gases. For example, FCI can appear at band fillings given by 
$$\nu=\frac{k}{(m-1)|c_1|+1}$$
with $m,k\in\NN$ such that $k\geq1, m\geq2$. Here $k=1$ (resp.\,$k>1$) corresponds to Abelian states (resp.\,non-Abelian states) and $m$ is even (resp.\,odd) for bosons (resp.\,fermions). The stability of FCI are conjectured due to the ideal quantum geometry (cf.\,\cite{tarnopolsky2019origin} \cite{wang2022origin} \cite{liu2022recent}).


\subsection*{Structure of the paper} In Section \ref{sec:TML}, we study symmetry groups and irreducible representations. This yields the so-called protected zero eigenstates for all $\alpha\in\CC$. In Section \ref{sec:floquet}, we recall a generalized Bloch--Floquet theory introduced in \cite{becker2022fine} and give spectral characterizations of magic angles. In Section \ref{sec:magic}, we prove Theorem \ref{thm:1}. Then in Section \ref{sec:separation}, we prove Theorem \ref{thm:2}. In Section \ref{sec:Chern}, we construct the flat band eigenfunctions assuming the simplicity of magic angles and prove Theorem \ref{thm:3}.

\subsection*{Acknowledgement}
The author is very grateful to Maciej Zworski for introducing him to the field of magic angles as well as for having many helpful discussions and comments on the early manuscript. The author would also like to thank Simon Becker for helpful discussions and providing various numerical supports, and Zhongkai Tao, Jie Wang and Jared Wunsch for helpful discussions related to this project and comments on the manuscript.

\section{Hamiltonians of Twisted Multilayer Graphene}
\label{sec:TML}
In this section, we study symmetries of the Hamiltonian $\Hn$. The symmetry group commutes with the Hamiltonian and split the functional space into irreducible representations. This gives rise to the so-called protected states. We also consider eigenstates with zero eigenvalue for $\bt=0$, for which some of them are unprotected in the sense that the corresponding eigenbranches separate from the flat band as we turning on the inter-layer parameter $\bt$ when $\alpha\in\mathcal{A}$. 

\subsection{Symmetries of the Hamiltonian}
\subsubsection{Translational symmetry}
The first identity in \eqref{eq:symmU} shows that for 
$ L_{\mathbf a } \mathbf v ( z ) := \mathbf v ( z + \mathbf a ) $, 
\begin{equation}
  \begin{split}
    \nonumber
    D_n(\alpha; \bt) L_{ \mathbf a } 
    = \mathrm{diag}_n\left\{\begin{pmatrix}  \omega & 0 \\ 0 & 1 \end{pmatrix}  \right\}
    L_{\mathbf a } D_n(\alpha;t)\  \mathrm{diag}_n\left\{ \begin{pmatrix} \bar  \omega & 0 \\ 0 & 1 \end{pmatrix} \right\},
    \ \ \mathbf a = \tfrac{ 4}3 \pi i \omega^\ell , \ \ \ell = 1, 2
  \end{split}
\end{equation}
where $\mathrm{diag}_n\{(*)\}$ denotes block diagonal matrices with $n$ blocks. Therefore, 
\begin{equation}
\label{eq:defLa0} 
\mathscr L_{\mathbf a } D ( \alpha )  = D ( \alpha )  \mathscr L_{\mathbf a } , \ \ \ 
 {\mathscr L}_{\mathbf a } :=  \diag_n\left\{\begin{pmatrix}  \omega & 0 \\ 0 & 1 \end{pmatrix}  \right\} L_{\mathbf a } , \ \ \ 
\mathbf a = \tfrac{ 4}3 \pi i \omega^\ell , \ \ \ell = 1,2. 
\end{equation}
Therefore for the lattice
\begin{equation}
\label{eq:defGam3}
 \Lambda = \tfrac43 \pi i \omega ( \ZZ \oplus \omega \ZZ)  
\end{equation}
and
\begin{equation}
\label{eq:defLa} 
\mathscr L_{\mathbf a} := \diag_n\left\{\begin{pmatrix}  \omega^{a_1 + a_2} & 0 \\ 0 & 1 \end{pmatrix} \right\}   L_{\mathbf a} , \ \ 
\mathbf a = \tfrac 43 \pi i \omega ( a_1 + \omega a_2 ) \in \Lambda, 
\end{equation}
we obtain a unitary action of $\Lambda$ on $ L^2 ( \CC; \CC^{2n} ) $, or more precisely on 
$L^2 ( \CC/3\Lambda; \CC^{2n} )$, with $\mathbf a \mapsto \mathscr L_{\mathbf a }$. We can further extend the action of $ \mathscr L_{\mathbf a } $ to 
$ L^2 ( \CC ; \CC^{4n} ) $ or $ L^2 ( \CC/3\Lambda; \CC^{4n} ) $ block-diagonally
and it yields 
\begin{equation}
\mathscr L_{\mathbf a} H_n ( \alpha ) = H_n ( \alpha ) \mathscr L_{\mathbf a }, \ \ 
\mathbf a  \in \Lambda.
\end{equation}

\subsubsection{Rotational symmetry}
The second identity in \eqref{eq:symmU} shows that
$$  [ D ( \alpha ) \mathbf u ( \omega \bullet ) ] ( z )  = \bar \omega 
[ D ( \alpha ) \mathbf u ] ( \omega z ) .$$
Therefore, applying this to $D_n(\alpha, \bt)$ in equation \eqref{eq:defD}, we have 
\begin{gather}
[ D_n ( \alpha; \bt ) \mathbf u ( \omega \bullet ) ] ( z )  = \bar \omega 
[ J D_n ( \alpha; \bt ) J^* \mathbf u ] ( \omega z ),\\
  J := \diag(1,1, \bar{\omega}, \omega, \bar{\omega}^2, \omega^2 \cdots, \bar{\omega}^{n-1}, \omega^{n-1}).
\end{gather}
Similarly for $D^*_n ( \alpha; \bt )$, we have 
$$  [ D^*_n ( \alpha; \bt ) \mathbf u ( \omega \bullet ) ] ( z )  =  \omega 
[ J D^*_n ( \alpha; \bt ) J^* \mathbf u ] ( \omega z ).$$ 
Hence, for $\mathbf u \in  L^2 ( \CC ; \CC^{4n} )$,
\begin{gather}
\label{eq:CanD} 
\mathscr C' \begin{pmatrix} J 
&  \\  &  J \end{pmatrix} \Hn \begin{pmatrix} J^* 
&  \\  &  J^* \end{pmatrix} = \Hn \mathscr C', \\
\mathscr C' \mathbf u ( z ) := \begin{pmatrix} \text{Id}_{2n} &  \\ 0 &  \bar \omega \text{Id}_{2n}  \end{pmatrix} \mathbf u ( \omega z ).
\end{gather}
Therefore, we obtain
\begin{gather}
\mathscr C \Hn = \Hn \mathscr C, \ \ 
\mathscr{C} := \mathscr C' \begin{pmatrix} J 
&  \\  &  J \end{pmatrix}
\end{gather}

\subsubsection{Additional symmetries}
We record some additional actions and symmetries involving $ \Hn$. 
\begin{equation}
\label{eq:defW}
\begin{gathered}   \Hn = - \mathscr W \Hn \mathscr W^*, \ \ \ \mathscr W := \begin{pmatrix}
1 & 0 \\
0 & -1 \end{pmatrix} ,
\ \ \mathscr W  \mathscr C  = 
 \mathscr C  \mathscr W ,
 \ \ \mathscr L_{\mathbf a } \mathscr W = \mathscr W \mathscr L_{\mathbf a },
\end{gathered}
\end{equation}
This implies that the spectrum of $\Hn$ is even. 
\begin{rem}
    With the notation introduced in \eqref{eq:Dba} we also have 
    $$H_{n,\kk}(\alpha; \bt)  = - \mathscr W H_{n,\kk}(\alpha; \bt) \mathscr W^*.$$
    This implies that for each eigenvalue $E_j(\kk)\geq0$ of $H_{n,\kk}(\alpha; \bt)$, there exists another eigenvalue $E_{-j}(\kk)\leq0$ such that $E_{-j}(\kk) = -E_{j}(\kk)$.
\end{rem}

\subsubsection{Group actions on functional spaces}
Following \cite{becker2020mathematics}, since $ \mathscr C \mathscr L_{\mathbf a } = \mathscr L_{\bar \omega \mathbf a } \mathscr C$, we can combine the two actions into a unitary group action that
commutes with $\Hn$:
\begin{equation}
\label{eq:defG}  
\begin{gathered}  G := \Lambda \rtimes \ZZ_3 , \ \ 
  \ZZ_3 \ni k : \mathbf a \to \bar \omega^k  \mathbf a , \ \ \ ( \mathbf a , k ) \cdot ( \mathbf a' , \ell ) = 
( \mathbf a + \bar \omega^k \mathbf a' , k + \ell ) ,
\\  ( \mathbf a, \ell ) \cdot \mathbf u = 
\mathscr L_{\mathbf a } \mathscr C^\ell \mathbf u . \ \
\end{gathered}
\end{equation}
Taking a quotient by $ 3\Lambda $, we obtain a finite group acting 
unitarily on $ L^2 ( \CC/3\Lambda; \CC^{4n} ) $ and commuting with $ \Hn $:
\begin{equation}
\label{eq:defG3}
G_3 :=  G/3\Lambda = \Lambda/3\Lambda \rtimes \ZZ_3 \simeq \ZZ_3^2 \rtimes \ZZ_3. 
\end{equation}
Restricting to the components, $G$ (resp.\ $G_3$) acts on 
$ L^2 ( \CC ; \CC^{2n} ) $  (resp.\ $ L^2 ( \CC/3\Lambda; \CC^{2n} ) $) as well and we
use the same notation for those actions. With the previous discussions, we have the following
\begin{prop}
  \label{prop:specH}
The operator $ \Hn: L^2 ( \CC; \CC^{4n} ) \to L^2 ( \CC; \CC^{4n} )$ is
an unbounded self-adjoint operator with the domain given by 
$ H^1 ( \CC; \CC^{4n} ) $. The operator $ \Hn $ commutes with the 
unitary action of the group $ G $ given by \eqref{eq:defG} and 
\[ \Spec_{ L^2 ( \CC ) } \Hn  = 
- \Spec_{ L^2 ( \CC ) } \Hn . \]
The same conclusions are valid when $ L^2 (\CC ) $ is replaced by $ L^2 (\CC/3\Lambda) $
and $ G $ by $ G_3 $ given by \eqref{eq:defG3}. In that case, the spectrum is discrete. 
\end{prop}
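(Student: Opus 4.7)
The plan is to address the four claims in sequence, each being a verification that leverages identities already established in the preceding subsections.

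For self-adjointness, I would write $H_n(\alpha;\bt) = H_n(0;\mathbf 0) + V$, where $H_n(0;\mathbf 0)$ is the constant-coefficient first-order differential part obtained by switching off both the chiral potential and the tunneling. This residual operator is block off-diagonal with $2D_{\bar z}\cdot\mathrm{Id}_{2n}$ in one block and its formal $L^2$-adjoint in the other, a standard elliptic Dirac-type system that is essentially self-adjoint on $C_c^\infty(\CC;\CC^{4n})$ with closure domain $H^1(\CC;\CC^{4n})$. The perturbation $V$ consists of multiplication by a bounded Hermitian matrix-valued function (the $\alpha U(\pm z)$ entries are bounded by periodicity and continuity of $U$, and the tunneling blocks $t_k T_\pm$ are constant). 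Kato--Rellich then gives self-adjointness of $H_n$ on the same domain, and the identical argument runs on $L^2(\CC/3\Lambda;\CC^{4n})$ with domain $H^1(\CC/3\Lambda;\CC^{4n})$.

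For the $G$-equivariance on $L^2(\CC;\CC^{4n})$, the translational commutation $[H_n,\mathscr L_{\mathbf a}]=0$ for $\mathbf a\in\Lambda$ follows from \eqref{eq:defLa0} applied block-diagonally together with the definition \eqref{eq:defLa}, while $[H_n,\mathscr C]=0$ is \eqref{eq:CanD}. The intertwining $\mathscr C\mathscr L_{\mathbf a}=\mathscr L_{\bar\omega\mathbf a}\mathscr C$ is a direct computation from the definitions and, combined with $\mathscr C^3=\mathrm{Id}$, realizes the semidirect product structure $G=\Lambda\rtimes\ZZ_3$; both generators are unitary on $L^2(\CC;\CC^{4n})$, so one obtains a unitary $G$-representation commuting with $H_n$. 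To descend to an action of $G_3$ on $L^2(\CC/3\Lambda;\CC^{4n})$, I would verify that $3\Lambda\subset\Lambda$ acts trivially on $3\Lambda$-periodic functions: for $\mathbf a\in 3\Lambda$, writing $\mathbf a=\tfrac{4}{3}\pi i\omega(3b_1+3\omega b_2)$, the phase $\omega^{a_1+a_2}=\omega^{3(b_1+b_2)}=1$, so $\mathscr L_{\mathbf a}$ reduces to pure translation by an element of $3\Lambda$, which is the identity on periodic functions.

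The spectral symmetry $\Spec H_n=-\Spec H_n$ is immediate from \eqref{eq:defW}: $\mathscr W$ is unitary and $\mathscr W H_n\mathscr W^*=-H_n$, so conjugation by $\mathscr W$ bijects the spectrum onto its negation, on either $L^2$ space. Finally, discreteness on the torus is standard: $H_n$ is a self-adjoint first-order elliptic operator with smooth bounded coefficients on the compact manifold $\CC/3\Lambda$, so the compact Rellich embedding $H^1(\CC/3\Lambda;\CC^{4n})\hookrightarrow L^2(\CC/3\Lambda;\CC^{4n})$ makes $(H_n\pm i)^{-1}$ compact, yielding a discrete spectrum with finite-dimensional eigenspaces accumulating only at $\pm\infty$. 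I do not anticipate a serious obstacle: each assertion reduces either to a direct quotation of an identity from the previous pages or to a standard invocation of Kato--Rellich and Rellich--Kondrachov. The one point I would single out for explicit checking is the factorization of $\mathscr L_{\mathbf a}$ through the quotient by $3\Lambda$, which justifies the passage from the $G$-action to the $G_3$-action.
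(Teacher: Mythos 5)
Your proposal is correct and fills in exactly the details the paper leaves implicit: the paper prefaces the proposition with ``With the previous discussions, we have the following'' and gives no further proof, relying on the symmetry identities \eqref{eq:defLa0}--\eqref{eq:defLa}, \eqref{eq:CanD}, \eqref{eq:defW}, and the group definitions \eqref{eq:defG}--\eqref{eq:defG3}. Your Kato--Rellich argument for self-adjointness on $H^1$, the observation that the perturbation is a bounded Hermitian matrix-valued multiplication, the direct quotation of the commutation identities for $G$-equivariance, the check that $3\Lambda$ acts trivially so the action descends to $G_3$, the $\mathscr W$-conjugation for spectral symmetry, and the Rellich--Kondrachov compact-resolvent argument for discreteness on $\CC/3\Lambda$ are all sound and are the standard route one would take; they match the approach the paper implicitly assumes. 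No gap.
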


\subsection{Protected states at zero}
\label{sec:protect}
To show the existence of protected zero eigenstates of the operator $\Hn$, we start with considering all irriducible representations of $G$ (resp.\ $G_3$) on $L^2(\CC)$ (resp.\ $ L^2 (\CC/3\Lambda)$).

\subsubsection{Irreducible representations}
Recall as in \cite[Section 2.2]{becker2020mathematics}, irreducible unitary representations of $ \ZZ_3^2 $ are one dimensional 
and are given by 
\begin{equation}
\label{eq:reprZ32}
\begin{gathered}   \pi_{\mathbf k} : \ZZ_3^2 \to {\mathsf U}(1) , \ \ \  \pi_{\bf k }(\mathbf a )  = e^{i\langle \mathbf a, \mathbf k \rangle} ,\\ 
\mathbf a = \tfrac 43 \pi i \omega  ( a_1 + a_2 \omega ) , \ \  \
a_j \in \ZZ_3 , \ \ \ 
\mathbf k = \tfrac{1}{\sqrt 3 } \omega ( \omega k_1 - k_2 )  , \ k_j 
\in \ZZ_3, \\ 
 \langle \mathbf a, \mathbf k \rangle = \tfrac 1 2 ( \mathbf a \bar {\mathbf k } + 
\bar {\mathbf a} \mathbf k ) =
\tfrac{2 \pi}3 (k_1 a_1 + k_2 a_2) .
\end{gathered}
\end{equation}
Irreducible representations of $ G_3 $ are one dimensional for 
$ \mathbf k \in \Delta $ (given by $ \Delta(\ZZ_3 ) := \{ ( k, k ) , 
 k \in \ZZ_3 \}$ -- note that $ \langle \mathbf k , 
 \omega \mathbf a \rangle = \langle \mathbf k , \mathbf a \rangle$, 
 $ \mathbf a \in \Lambda/3\Lambda $, 
  if and
 only if $ \mathbf k \in \Delta  $), 
\[  \rho_{ k,p } ( ( \mathbf a , \ell ) ) = 
\bar \omega^{\ell p}  \pi_{(k,k)} ( \mathbf a ) ,\]
or three dimensional, for $ \mathbf k \notin \Delta  $:
\[   \rho_{\mathbf k}( ( \mathbf a , \ell ) ) 
= \begin{pmatrix} \omega^{ \langle \mathbf k , \mathbf a \rangle }
& 0 & 0 \\
0 & \omega^{\langle  \mathbf k , \omega \mathbf a \rangle }
 & 0 \\
0 & 0 &  \omega^{\langle  \mathbf k , \omega^2 \mathbf a \rangle }

\end{pmatrix} \begin{pmatrix} 0 & 1 & 0 \\
0 & 0 & 1 \\
1 & 0 & 0 \end{pmatrix}^{\ell}  \in {\mathsf U }( 3 ) . \]
The representations are equivalent for $ \mathbf k $ in the same orbit of
the transpose of $ \mathbf a \mapsto \omega \mathbf a $,
and hence there are only two. 

Defining the following functional space:
\begin{equation}
L^2_{k,p}:= \left\{\mathbf{u}\in L^2(\CC/3\Lambda; \CC^{4n}): \  \mathscr L_{\mathbf a } \mathscr C^{\ell} \mathbf u = e^{i\langle \mathbf a, \mathbf k \rangle} \bar \omega^{\ell p}   u\right\}
\end{equation}
There are then 11 irreducible representations: 9 one dimensional and 2 three dimensional. We can decompose 
$ L^2 ( \CC/3\Lambda; \CC^{4n} ) $ into 11 orthogonal subspaces:
\begin{equation}
  \label{eq:L2-decomp}
  L^2 ( \CC/3\Lambda; \CC^{4n} ) = \bigg(\bigoplus_{k, p \in 
  \ZZ_3 }  L^2_{ {k,p}  }\bigg)  
  \oplus L^2_{  {(1,0)} }  
  \oplus L^2_{  {(2,0)} }.
\end{equation} 
In view of Proposition \ref{prop:specH} we define
\begin{equation}
  H_{{k,p}}(\alpha; \textbf{t}) := \Hn   : L^2_{ {k,p} }\cap H^1 (\CC / 3\Lambda; \CC^{4n}) \longrightarrow L^2_{ {k,p} },
\end{equation}
with $ H_{{(1,0)}} $ and $ H_{{(0,1)}} $ similarly defined. Here we omit the dimensional constant $n$ when there is no ambiguity.

\subsubsection{Protected zero eigenstates}
We start with the case $\alpha=0, \bt\in (\RR\backslash\{0\})^{n-1}$. The kernel of $\Hn$ at $\alpha=0$ is given by 
$$\ker_{L^2 ( \CC/3\Lambda ; \CC^{4n}) } H_n(0;\bt)= \{\mathbf{e}_1, \mathbf{e}_{2n}, \mathbf{e}_{2n+2}, \mathbf{e}_{4n-1}\},$$ 
where the set $ \{\mathbf{e}_j\}_{j=1}^{4n} $ forms the standard basis of $ \CC^{4n} $. Now we consider the action of $G_3 = \ZZ_3^2 {\rtimes} \ZZ_3$ on $\ker_{L^2 ( \CC/3\Lambda; \CC^{4n}) } H_n(0; \textbf{t})$ with $\mathbf a = \frac 4 3 \pi i \omega ( a_1 + a_2 \omega ), \ (a_1,a_2)\in \ZZ_3^2$:
\begin{gather*}  
  \mathscr L_{\mathbf a } \mathbf e_1 =  \omega^{a_1 + a_2 } \mathbf e_1 , \ \ \ \mathscr L_{\mathbf a } \mathbf e_{2n} =  \mathbf e_{2n} ,\ 
  \ \ \mathscr L_{\mathbf a } \mathbf e_{2n+2} = \mathbf e_{2n+2}, \ \ \
  \mathscr L_{\mathbf a } \mathbf e_{4n-1} = \omega^{a_1 + a_2 }  \mathbf e_{4n-1},
  \\ \mathscr C  \mathbf e_1 = \mathbf e_1 , \ \ \
  \mathscr C  \mathbf e_{2n} = \bar \omega^{1-n} \mathbf e_{2n} , \ \ \
  \mathscr C  \mathbf e_{2n+2} = \bar \omega \mathbf e_{2n+2}, \ \ \ 
  \mathscr C  \mathbf e_{4n-1} = \bar \omega^n \mathbf e_{4n-1}.
\end{gather*}

If $n\neq 3k$ for $k\in \mathbb{N}$, then $\mathbf{e}_1, \mathbf{e}_{2n}, \mathbf{e}_{2n+2}, \mathbf{e}_{4n-1}$ are in different irriducible representations of $L^2 ( \CC/3\Lambda; \CC^{4n})$ under group action $G_3$ with 
\[ \mathbf e_1 \in L^2_{ {1,0}} , \ \ \ 
\mathbf e_{2n} \in L^2_{ {0,[1-n]}} , \ \ \ 
\mathbf e_{2n+2} \in L^2_{ {0,1}}, \ \ \ 
\mathbf e_{4n-1} \in L^2_{ {1,[n]}}, 
\] 
where $[k]\in \{0,1,2\}$ with $[k]\equiv k\mod 3$. 
Therefore if $n\neq 3k$ and $\alpha =0$, each of these irriducible representation $ H_{{k,p}} (0; \textbf{t}) $ has a single zero eigenvalue. Since $ \mathscr W $ (see \eqref{eq:defW}) commutes with the action of $ G_3 $, the spectra of $ H_{{k,p}} ( \alpha; \bt ) $ are symmetric with respect to $0$ (see Proposition \ref{prop:specH}), it follows that for each $k, p$ as above, $ H_{{k,p}} ( \alpha; \bt ) $ has an eigenvalue at $0$. 

When $n=3k$ for some $k\in \mathbb{N}$, we have
\[ \mathbf e_1, \mathbf e_{4n-1} \in L^2_{ {1,0}} , \ \ \ 
\mathbf e_{2n}, \mathbf e_{2n+2} \in L^2_{ {0,1}}.
\] 
since $\mathbf e_{4n-1}$ and $\mathbf e_{2n}$ are always eigenfuctions of $H_{{1,0}} ( \alpha; \bt )$ and $H_{{0,1}} ( \alpha; \bt )$ correspondingly, they are protected zero eigenstates naturally. The symmetry of the spectra of $H_{{1,0}}$ (resp.\ $H_{{0,1}}$) implies that $H_{{1,0}} ( \alpha; \bt )$ and $H_{{0,1}} ( \alpha; \bt )$ both have two protected eigenvalues at $0$.
Since 
\begin{equation}
\label{eq:kerH-decomp}
    \begin{split}
        \ker_{L^2 ( \CC/3\Lambda ; \CC^{4n} ) } \Hn = 
 \ker_{L^2 ( \CC/3\Lambda ; \CC^{2n} ) } &\Dn \oplus \{ 0_{\CC^{2n} } \} \\
  + \{ 0_{\CC^{2n} } \} \oplus &\ker_{L^2 ( \CC/3\Lambda ; \CC^{2n} ) } D^*_n(\alpha,\bt),
    \end{split}
\end{equation}
we obtained the following result about symmetric protected eigenstates at $0$:
\begin{prop}
\label{prop:prot}
For all $ \alpha \in \CC $ and any $\bt\in (\RR\backslash \{ 0 \})^{n-1}$ 
\[ \ker_{ L^2_{{ 1,0} } }  \Dn  \neq \emptyset , \ \ 
\ker_{ L^2_{{ 0,[1-n]} }}  \Dn =
\{ \mathbf{e}_{2n} \}.
\]
\end{prop}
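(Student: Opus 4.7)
My plan is to combine an explicit computation for the concrete kernel element $\mathbf{e}_{2n}$ with the parity-of-spectrum argument already sketched in the paragraphs preceding the proposition.

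For the second claim I would check directly that $D_n(\alpha;\bt)\mathbf{e}_{2n}=0$ for every $\alpha\in\CC$ and every $\bt\in(\RR\setminus\{0\})^{n-1}$. The constant vector $\mathbf{e}_{2n}$ is supported only in the second slot of the $n$-th block, so the only entries of the matrix \eqref{eq:defD} which can act nontrivially on it are the diagonal $D(0)$ of the $n$-th layer and the coupling $t_{n-1}T_-$ coming from layer $n-1$; both annihilate $\mathbf{e}_{2n}$ because $2D_{\bar z}$ kills constants and $T_-$ picks out the second slot of a zero input. Combined with the identities $\mathscr L_{\mathbf a}\mathbf{e}_{2n}=\mathbf{e}_{2n}$ and $\mathscr C\mathbf{e}_{2n}=\bar\omega^{1-n}\mathbf{e}_{2n}$ already recorded above, this places $\mathbf{e}_{2n}\in\ker D_n(\alpha;\bt)\cap L^2_{0,[1-n]}$ independently of $\alpha$ and $\bt$.

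For the first claim I would apply the parity-of-spectrum argument to $H_{1,0}(\alpha;\bt)$. The $\alpha=0$ computation gives $\dim\ker H_{1,0}(0;\bt)=1$, an odd number. Because $\mathscr W$ commutes with the $G_3$-action, the spectrum of $H_{1,0}(\alpha;\bt)$ is invariant under $E\mapsto -E$ for every $\alpha$, so the rank of the Riesz projector onto any small disk about $0$ must remain odd under symmetry-preserving analytic deformations. By Kato's holomorphic perturbation theory this rank is locally constant in $\alpha$, hence stays $\geq 1$ throughout $\CC$. From the block structure of $H_n$ together with the twist $\mathscr C(u_+,u_-)=(\mathscr C u_+,\bar\omega\mathscr C u_-)$ on $\CC^{4n}$, one obtains the decomposition $\ker H_n|_{L^2_{1,0}(\CC^{4n})}=\ker D_n|_{L^2_{1,0}(\CC^{2n})}\oplus\ker D_n^*|_{L^2_{1,-1}(\CC^{2n})}$. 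Tracking the holomorphic continuation of the eigenvector, which at $\alpha=0$ equals $\mathbf{e}_1\in L^2_{1,0}(\CC^{2n})\oplus\{0\}\subset L^2_{1,0}(\CC^{4n})$ and sits entirely on the $D_n$-side, one concludes $\ker D_n(\alpha;\bt)|_{L^2_{1,0}}\neq\{0\}$ for every $\alpha\in\CC$.

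The main obstacle is the transition from real $\alpha$, where $H_n$ is self-adjoint and the parity argument is classical, to complex $\alpha$; in particular one must rule out the scenario where the kernel of $H_{1,0}$ remains odd-dimensional but its $D_n$-component becomes trivial. I would handle this by invoking holomorphy of the Riesz projector together with analytic continuation from the explicit starting value $\mathbf{e}_1$ at $\alpha=0$ across the connected parameter space $\CC$. The uniqueness $\ker_{L^2_{0,[1-n]}}D_n=\CC\mathbf{e}_{2n}$ is obtained by the same analytic-continuation principle starting from the one-dimensional kernel at $\alpha=0$. The case $n\equiv 0\pmod{3}$ is handled identically, with protected subspaces of dimension $2$ in each of $L^2_{1,0}$ and $L^2_{0,1}$ playing the role of the one-dimensional case.
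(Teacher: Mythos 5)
Your proposal follows the paper's own route: explicit verification that $\mathbf{e}_{2n}$ kills $D_n(\alpha;\bt)$, the $\mathscr W$-parity argument showing $\ker H_{1,0}(\alpha;\bt)\neq\{0\}$ for all $\alpha$, and then the kernel decomposition \eqref{eq:kerH-decomp}. One small slip in the computation of $D_n\mathbf{e}_{2n}$: the off-diagonal entry in column $n$ is $t_{n-1}T_+$, not $t_{n-1}T_-$; it still annihilates $(0,1)^T$ because $T_+=\diag(1,0)$, but the phrase "$T_-$ picks out the second slot of a zero input" is not the right reason. Also, your worry about "the transition from real $\alpha$, where $H_n$ is self-adjoint, to complex $\alpha$" is a red herring: $H_n(\alpha;\bt)$ is self-adjoint for every $\alpha\in\CC$ since it is of the block form $\begin{pmatrix}0&D_n^*\\D_n&0\end{pmatrix}$, so the parity argument applies uniformly.

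The substantive issue is the one you yourself flag: the parity argument gives $\ker H_{1,0}\neq\{0\}$, but the kernel splits as $\ker D_n\big|_{L^2_{1,0}}\oplus\ker D_n^*\big|_{L^2_{1,2}}$, and you need the $D_n$-component to be nonzero. Your proposed fix — "holomorphy of the Riesz projector together with analytic continuation from the explicit starting value $\mathbf{e}_1$" — does not work as stated, because at magic angles the kernel of $D_n$ jumps in dimension and there is no globally holomorphic continuation of "the eigenvector." The deformation invariant you want is the Fredholm index of $D_n$ restricted between the relevant $\mathscr C$-eigenspaces (equivalently the $\mathscr W$-signature $\dim\ker_{\mathscr W=+1}-\dim\ker_{\mathscr W=-1}$ of $\ker H_{1,0}$), which equals $1$ at $\alpha=0$ for $n\not\equiv0\pmod3$ and hence stays $\geq1$ for all $\alpha$. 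For $n\equiv0\pmod3$ this index is $0$, so "handled identically" is too quick: there one first notes that $\mathbf{e}_{2n-1}\in\ker D_n^*$ is a protected state giving $\dim\ker D_n^*\geq1$, and only then does the vanishing index yield $\dim\ker D_n\geq1$. Finally, the claimed uniqueness $\ker_{L^2_{0,[1-n]}}D_n=\CC\mathbf{e}_{2n}$ cannot follow from the "same analytic-continuation principle" — both the parity and the index argument only give lower bounds and leave the multiplicity free to increase at magic angles — so that part of your proposal is incomplete (the paper is also terse here).
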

\begin{rem}
  The protected states are also discussed in \cite{wang2022hierarchy} using perturbation theory for $\bt$ infinitesimal. Here we provide a different approach to show the zero eigenstates are protected for all $\bt \in \RR^{n-1}$.
\end{rem}

\subsubsection{Unprotected zero eigenstates and band separation}
Now we consider the case $\textbf{t}=0$, i.e., no tunneling in the upper and lower $n$-layers. We show that in this case, there are additional $4n-4$ zero eigenstates of $H_n(\alpha; 0)$ that is unprotected, in the sense that they result in band separations as we turning on the tunneling parameter $\textbf{t}$ (cf. Section \ref{sec:separation}.).

The operator $D_n(\alpha; 0)$ is decomposed into direct sums of one twisted bilayer operator $D(\alpha)$ (see equation \eqref{eq:defD1}) and $D(0)= \mathrm{diag}{(2D_{\bar z}, 2D_{\bar z})}$. The kernel of $H_n(\alpha; 0)$ is $4n$-dimensional and given by direct sums of $\ker_{L^2(\CC/3\Lambda; \CC^{2})} D(\alpha)$ and $\ker_{L^2(\CC/3\Lambda; \CC^{2})} D^*(\alpha)$ and $(2n-2)$ copies of $\ker_{L^2(\CC/3\Lambda; \CC^{2})} D(0)$:
\begin{equation*}
  \ker_{L^2 ( \CC/3\Lambda; \CC^{4n}) } H_n(\alpha;0)= \{\Phi_i(\alpha)\}_{i=1,2,2n+1, 2n+2}\cup \{\mathbf{e}_{j}\}_{j\neq 1,2, 2n+1, 2n+2}
\end{equation*}
where
\begin{gather*}
    \Phi_i(\alpha) = (\varphi_i(\alpha), 0_{\CC^{4n-2}}), \ \ 
    \Phi_{2n+i} (\alpha)= (0_{\CC^{2n}}, \psi_i(\alpha), 0_{\CC^{2n-2}}),\ \ i=1,2;\\ 
    D(\alpha)\varphi_i=0, \ \ D^*(\alpha)\psi_i = 0, \ \ i=1,2.
\end{gather*}
We can again consider the action of $G_3 = \ZZ_3^2 {\rtimes} \ZZ_3$ on $\ker_{L^2 ( \CC/3\Lambda; \CC^{4n}) } H_n(\alpha; 0)$ with $\mathbf a = \frac 4 3 \pi i \omega ( a_1 + a_2 \omega ),\ (a_1, a_2)\in \ZZ_3^2$. 

For simplicity of presentation, we only discuss the case $n=2$. This yields 
\begin{gather}
  \label{eq:unpro1} 
\Phi_1 \in L^2_{ {1,0}}, \ \ \ 
\mathbf{e}_{4} \in L^2_{ {0,2}},\ \ \  
\Phi_{6} \in L^2_{ {0,1}}, \ \ \ 
\mathbf{e}_{7} \in L^2_{ {1,2}},\\
    \label{eq:unpro2} 
\{\Phi_2, \mathbf{e}_{8}\} \subset L^2_{ {0,0}}, \ \ \ 
\{ \mathbf e_3, \Phi_{5}\} \subset L^2_{ {1,1}}.
\end{gather} 
Similar to previous discussion, for $t=0$, each of these irreducible representation $ H_{{k,\ell}} (\alpha; 0)\,(k=0,1;\, \ell= 0,2)$ has a single zero eigenvalue, whereas $H_{{\ell,\ell}} (\alpha; 0) \,(\ell = 0,1)$ has a double zero eigenvalue. The commutativity of $ \mathscr W $ with $ G_3 $ yields that $ H_{{k,\ell}} ( \alpha )$ each have an eigenvalue at $0$. Whereas $\ker H_{{\ell,\ell}} (\alpha; t) = \emptyset$ for $\ell = 0,1$ from the discussion in the previous subsection shows that the pairs of zero eigenfunctions $\{\Phi_2, \mathbf{e}_{8}\}$ and $\{ \mathbf e_3, \Phi_{5}\}$ of $H_{{\ell,\ell}} (\alpha; 0)$ disappears as we turning on the tunneling parameter $t$. This, in fact, is the origin of the Dirac band separation from the flatband, which is discussed in Section \ref{sec:separation}. 

Same argument applies to any $n\in\NN$, where all the irreducible representations $ H_{{k,\ell}} (\alpha; 0)$ may have more than one zero eigenfunctions. The only protected zero eigenstates are eigenbranchs corresponding to eigenfunctions $\{\Phi_1, \mathbf{e}_{2n}, \Phi_{2n+2}, \mathbf{e}_{4n-1}\}$.

\section{Bloch--Floquet theory}
\label{sec:floquet}
In this section we introduce a Bloch--Floquet theory of the Hamiltonian $\Hn$ with respect to the translation operator $\mathscr L_{\mathbf{a}}, \ \mathbf{a}\in\Lambda$, which gives the band structure of the operator $H_{n,\kk}(\alpha; \bt)$ (cf.\,equation \eqref{eq:Dba}) for $\kk\in \CC/\Lambda^*$.

\subsection{A generalized Floquet theory approach}
\label{sec:flo2}
We follow \cite{becker2022fine} for the discussion of Floquet theory with respect to the operator $\mathscr L_{\mathbf a }$ with $\mathbf a \in \Lambda $, which has already implicitly appeared in the physics literature (cf.\,\cite{tarnopolsky2019origin}). One of the advantages of this type of Floquet theory is that the set $\ker_{{ L^2 ( \CC/3\Lambda ; \CC^{2} ) }} D(\alpha)$ at the flat band, which is nine dimensional (cf.\,\cite[Section 3.3]{becker2022fine}), splits to elements of $\ker_{{ L_{\mathbf 0}^2 ( \CC/\Lambda ; \CC^{2} ) }} (D(\alpha)+\mathbf k)$ (see \eqref{eq:L0} below for the definition of $L_{\mathbf 0}^2$) for nine different $\kk\in \CC/\Lambda^*$. This generalized Floquet theory also simplifies the presentation of band separation mechanism (cf.\, Section \ref{sec:separation}). 

We first define the inner product on $\CC$ by 
$$\langle z, w\rangle := \frac{1}{2}(z\bar w+\bar z w),\ \ z,w\in \CC.$$
Note that if $z\in \Lambda$ and $w\in\Lambda^*$, then we have $\langle z, w\rangle= 2\pi \ZZ$.

The generalized Floquet condition is formulated for $\mathbf w \in H^1_{\rm{loc} } (\mathbb C; \mathbb C^{4n})$ as follows:
\begin{equation}
\label{eq:flo}   
\mathscr L_{\mathbf a } \mathbf w ( z ) :
= e^{    i \langle \mathbf a , \mathbf k \rangle }
\mathbf w   ( z ) ,  \ \ \ \mathbf a \in \Lambda, \ \ \mathbf k \in \mathbb C/ \Lambda^*  , \end{equation}
where
\begin{equation}
\label{eq:akbracket}
\begin{gathered} \langle \mathbf a , \mathbf k \rangle =2\pi  (a_1 k_1 + a_2 k_2 )/3  , \ \ \
\mathbf a = \tfrac 4 3 \pi i \omega ( a_1 + \omega a_2 ) \in \Lambda , \ \ a_j \in \mathbb Z , \\
\mathbf k = \frac{1}{\sqrt{3}}\omega( \omega k_1 - k_2 ) \in \CC/\Lambda^* , \ \ 0\leq  k_j < 3.
\end{gathered}
\end{equation}
We consider the Floquet spectrum (cf.\,Proposition \ref{p:slight} for the discreteness of the spectrum) with Floquet condition \eqref{eq:flo} under the twisted translation operator $\mathscr L_{\mathbf a}$:
\begin{gather}
  \label{eq:floquetbc2}
  \Hn \mathbf w_j ( \alpha, \mathbf k ) = E_j ( \alpha, \kk; \bt ) 
\mathbf w_j ( \alpha, \mathbf k ),\\ 
\mathscr L_{\mathbf a } \mathbf w_j ( \alpha, \mathbf k ) = e^{  i \langle \mathbf a , \mathbf k \rangle }
\mathbf w_j ( \alpha, \mathbf k ) , \ \
\mathbf a \in \Lambda, \ \ \mathbf k \in \mathbb C/ \Lambda^*.
\end{gather}
We define
\begin{equation}
  \label{eq:floquet-phy}
  {\mathbf v}_j ( \alpha , \mathbf k ) := e^{  -i \langle z, \mathbf k \rangle } \mathbf  w_j
( \alpha, \mathbf k ) ( z ) , \ \ \ 
\mathbf a \in  \Lambda , \ \ \mathbf k \in \mathbb C/ \Lambda^*,
\end{equation}
which yields $\mathscr L_{\mathbf a } {\mathbf v}_j ( \alpha, \mathbf k ) = 
{\mathbf v}_j ( \alpha, \mathbf k )$. We then consider the operator
\begin{equation}
\label{eq:Dba}   
H_{n, \mathbf k } ( \alpha; \bt ) := e^{-i\langle z, \mathbf k \rangle}  \Hn e^{i\langle z, \mathbf k \rangle} 
= \begin{pmatrix}  0 & \Dn^* + \bar {\mathbf k} \\
\Dn + {\mathbf k } & 0 \end{pmatrix}.
\end{equation}
with boundary condition $ \mathscr L_{\mathbf a } \mathbf v = \mathbf v $ and that gives
 us {\em equivalent} Floquet spectrum:
\begin{equation}
\begin{split}
\label{eq:new_Floquet}
 H_{\mathbf k, n } ( \alpha;\bt )  {\mathbf v}_j ( \alpha , \mathbf k ) =  E_j ( \alpha, \mathbf k, \bt ) 
 {\mathbf v}_j ( \alpha, \mathbf k ) , \ \ 
\mathscr L_{\mathbf a }  {\mathbf v}_j ( \alpha, \mathbf k ) =  
{\mathbf v}_j ( \alpha, \mathbf k ) , \ \ \mathbf a \in \Lambda.
\end{split}
\end{equation}
Note that ${\mathbf v}_j ( \alpha, \mathbf k )$ also depends on $\bt$ but we omit it here for simplicity. This suggests that we consider $ H_{n,\mathbf k } ( \alpha; \bt ) $ as a self-adjoint operator on 
\begin{equation}
\label{eq:L0}
L^2_{\mathbf 0} = L^2_{\mathbf 0} ( \mathbb C/3\Lambda ; \mathbb C^{4n} ) := \{ \mathbf v \in L^2_{\rm{loc}} ( 
  \mathbb C/3\Lambda, \mathbb C^{4n} ) : \mathscr L_{\mathbf a} \mathbf v = \mathbf v , \ \mathbf a \in \Lambda  \}
\end{equation}
with the domain given by $ H^1_{\mathbf 0 } :=  L^2_{\mathbf 0 }  \cap H^1 (\mathbb C/3\Lambda ; \mathbb C^{4n})$. Note that we sometimes slightly abuse the notation of $L^2_{\mathbf 0}$ by restricting to the subspace $\CC^{2}$ or $\CC^{2n}$ of $\CC^{4n}$ the operator $\mathscr{L}_{\mathbf{a}}$ is diagonal.

In other words, we can define a generalized Bloch transform (cf.\,\cite[Chapter 5]{zworskipde}): 
\begin{gather}
    \mathcal{B} u(z,\kk):= \sum_{\mathbf{a}\in\Lambda} e^{-i\langle z+\mathbf{a}, \kk\rangle}\mathscr{L}_{\mathbf{a}}u(z)
\end{gather}
which can be checked easily satisfies
\begin{gather*}
    \mathcal{B} u(z,\kk+\mathbf{p}) = e^{-i\langle z, \mathbf{p}\rangle} \mathcal{B} u(z,\kk), \ \ \mathbf{p}\in \Lambda^*,\\
    \mathscr{L}_{\mathbf{a}'} \mathcal{B} u(\bullet,\kk) =  \sum_{\mathbf{a}\in\Lambda} e^{-i\langle z+\mathbf{a}+\mathbf{a}', \kk\rangle}\mathscr{L}_{\mathbf{a}+\mathbf{a}'}u(z) = \mathcal{B} u(\bullet,\kk),\ \ \mathbf{a}'\in \Lambda,
\end{gather*}
so that $\mathcal{B} u(\bullet,\kk)\in L^2_{\mathbf{0}}$. By the commutativity of $\Hn$ and $\mathscr{L}_{\mathbf{a}}$ we have
\begin{equation*}
       \mathcal{B} \Hn  =  H_{n,\kk}(\alpha;\bt) \mathcal{B}.
\end{equation*}
For a fixed $ \mathbf k \in \CC / \Lambda^* $, 
$ H_{n,\kk}= \mathcal B H_n \mathcal B^* $ acts on $L^2_{\bo}$ 
as the operator in \eqref{eq:Dba}. The modified Bloch transform thus gives rise to bands of eigenfunctions $E_j( \alpha, \mathbf k, \bt ),\ j\in\NN$ of $H_{n,\kk}(\alpha;\bt)$ for $\kk\in \Lambda^*$.

\begin{rem}
 In $\CC/\Lambda^*$, we denote two special points $\mathbf k=(0,0) = \mathbf 0$ by $K$ and $\mathbf k = (1,1)= \frac{1}{\sqrt{3}}\omega( \omega - 1 ) =-\mathbf i$ by $K'$ as in physics literature. 
\end{rem}


\subsection{Floquet theory and representations}
\label{sec:flo-rep}
We further define the spaces (cf.\,the decomposition \eqref{eq:L2-decomp}): 
\begin{equation}
  \label{eq:L2p}  
  \begin{gathered} 
  L^2_{\mathbf p}  :=
   \{ \mathbf u \in 
  L^2 ( \CC/ 3\Lambda ; \CC^{4n} ) :   \mathscr L_{\mathbf a } \mathbf u =
  e^{  i \langle \mathbf a , \mathbf p \rangle } \mathbf u, \ \ \forall \,  \mathbf a \in \Lambda \},
  \\
  L^2 ( \mathbb C/3\Lambda; \mathbb C^{4n} ) = \bigoplus_{ \mathbf p \in 
  3\Lambda^*/\Lambda^* }  L^2_{\mathbf p} .
  \end{gathered}
\end{equation}
Note that for
\[ \mathbf p \in \Delta =  \{ \tfrac 1 {\sqrt 3 } \omega ( \omega p  - p  ) : p \in \mathbb Z_3 \}, \]
we have orthogonal decomposition
\[   L_{\mathbf p}^2  = 
\bigoplus_{ \ell \in \mathbb Z_3 } L^2_{ {p, \ell} }, \ \ \
\mathbf p =  \tfrac 1 {\sqrt 3 } \omega ( \omega p  - p  ) .\]

\begin{rem}
\label{rem:ker_equiv}
    Note that $u\in \ker_{L^2_{\bo}}(\Dn+\mathbf{p})$ if and only if for 
    $v:= e^{i\langle z,\mathbf{p}\rangle}u$, $v\in \ker_{L^2_{\mathbf{p}}}\Dn$.
\end{rem}

Comparing the remaining terms in the decomposition \eqref{eq:L2-decomp} and \eqref{eq:L2p} we obtain
\[  L^2_{  {(j,0)} }
= \bigoplus_{ \mathbf p \in \mathscr O_j } L^2_{\mathbf p} ,
\ \ \ j = 1, 2 ,\]
where $\mathscr O_j,\ j=1,2$ being two orbits under action $\mathbf p \mapsto \bar \omega \mathbf p$ given by
\begin{equation}
  \label{eq:defO}   \mathbb Z_3 \setminus \Delta = \mathscr O_1 \sqcup \mathscr O_2 = 
  \{ (1,0) , (0,2), (2,1) \} \sqcup \{ (2,0), (0,1), (1,2) \} . 
\end{equation}

\subsection{Spectral characterization of magic angles}
We start by introducing some basic properties of the operator $ \Dn $ with domain $H^1_{\mathbf{0}} $.
We first observe that 
\begin{equation}
\label{eq:spec0}
\begin{split}
\Spec_{{ L^2_{\mathbf{0}}} } \Dnn = \Lambda^*+\{\mathbf{0}, \mathbf{i}\},
\end{split} 
\end{equation}
since for $e_{\mathbf k} ( z) := e^{  -i \langle z, \kk \rangle }$, 
\begin{gather*}
    (\Dnn + \kk )  e_{\mathbf{\kk} } \mathbf e_1 = 0, \ \ 
 \mathbf k \in \Lambda^*+\{-\mathbf{i}\};\ \ 
    (\Dnn+\kk)  e_{\mathbf k } \mathbf e_{2n} = 0, \ \ 
 \mathbf k \in \Lambda^*+\{\mathbf{0}\},
\end{gather*}
where the exponentials $ e_\mathbf k / \vol( \CC/\Lambda)^{\frac12},\ \kk\in\Lambda^* $ form an orthonormal basis of $ L^2 ( \CC/\Lambda ) $ and $ \{\mathbf e_j\}_{1\leq j\leq 2n} $ is the standard basis of $ \CC^{2n} $. This leads to the following proposition:

\begin{prop}
\label{p:slight}
The family $ \CC \ni \alpha \mapsto \Dn : H^1_{\bo}  \to L^2_{\bo}$
is a holomorphic family of elliptic Fredholm operators of index $ 0 $,
and for all $ \alpha $, the spectrum of $\Dn$ is $\Lambda^*$-periodic:
\begin{equation}
\label{eq:per}
\Spec_{ L^2_{\bo} } \Dn = 
\Spec_{ L^2_{\bo} } \Dn + \mathbf k , \ \ 
\mathbf k \in \Lambda^*. 
\end{equation}
\end{prop}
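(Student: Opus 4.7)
The statement separates into three independent pieces---holomorphy in $\alpha$, elliptic Fredholmness of index zero, and $\Lambda^*$-periodicity of the spectrum---which I would address in order.

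Holomorphy is immediate from \eqref{eq:defD} and \eqref{eq:defD1}: the only $\alpha$-dependent entries of $D_n(\alpha;\mathbf t)$ are the off-diagonal $\alpha U(\pm z)$ terms inside the leading $D(\alpha)$ block, and these define bounded multiplication operators $H^1_{\bo}\to L^2_{\bo}$ depending linearly on $\alpha$. Hence $\alpha\mapsto D_n(\alpha;\mathbf t)$ is an affine, in particular entire, family. Next, for ellipticity: the principal symbol is $\mathrm{diag}_{2n}(\xi_1+i\xi_2)$, coming from the $2D_{\bar z}$'s on the diagonal, since the tunneling blocks $t_k T_\pm$ and the potential $\alpha U$ are of order zero; this is invertible for $\xi\neq 0$. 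The space $L^2_{\bo}$ is naturally the $L^2$ sections of a Hermitian vector bundle over the compact torus $\CC/\Lambda$: from \eqref{eq:defLa}, the odd components transform by the character $\omega^{-(a_1+a_2)}$ of $\Lambda/3\Lambda$ (a nontrivial flat line bundle), while the even components are genuinely $\Lambda$-periodic. Standard elliptic theory then gives Fredholmness $H^1_{\bo}\to L^2_{\bo}$.

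To obtain index zero I would use stability under compact perturbations. Write $D_n(\alpha;\mathbf t)=D_n^{(0)}+V$ with $D_n^{(0)}=\mathrm{diag}_{2n}(2D_{\bar z})$ and $V$ a bounded multiplication operator; the Rellich embedding $H^1_{\bo}\hookrightarrow L^2_{\bo}$ makes $V\colon H^1_{\bo}\to L^2_{\bo}$ compact, so $\mathrm{ind}\,D_n(\alpha;\mathbf t)=\mathrm{ind}\,D_n^{(0)}$. Each scalar summand of $D_n^{(0)}$ is unitarily equivalent, via conjugation by $e^{i\langle z,\mathbf p_0\rangle}$ for an appropriate $\mathbf p_0\in\tfrac13\Lambda^*$, to $2D_{\bar z}+\mathbf p_0$ on the untwisted $L^2(\CC/\Lambda)$; this differs from $2D_{\bar z}$ itself by a compact perturbation, and the classical identification $\ker\,2D_{\bar z}=\mathrm{coker}\,2D_{\bar z}=\CC$ on the torus gives index zero in each summand, and hence in their sum.

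For the $\Lambda^*$-periodicity, define the unitary $U_{\mathbf p}\mathbf v:=e^{i\langle z,\mathbf p\rangle}\mathbf v$ for $\mathbf p\in\Lambda^*$. Since $\langle\mathbf a,\mathbf p\rangle\in 2\pi\ZZ$ for $\mathbf a\in\Lambda$, $U_{\mathbf p}$ commutes with $\mathscr L_{\mathbf a}$ and therefore preserves $L^2_{\bo}$. The pointwise identity $2D_{\bar z}(e^{i\langle z,\mathbf p\rangle}f)=e^{i\langle z,\mathbf p\rangle}(2D_{\bar z}+\mathbf p)f$ extends blockwise to $U_{\mathbf p}^{-1}D_n(\alpha;\mathbf t)U_{\mathbf p}=D_n(\alpha;\mathbf t)+\mathbf p$; consequently, if $D_n\mathbf v=\lambda\mathbf v$ in $L^2_{\bo}$ then $D_n(U_{\mathbf p}\mathbf v)=(\lambda+\mathbf p)U_{\mathbf p}\mathbf v$, which proves \eqref{eq:per}. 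The main bookkeeping hurdle is the Hermitian-bundle identification of $L^2_{\bo}$ and the tracking of twist characters across the $2n$ components; once these are in place, both index zero and periodicity reduce to the interwining $U_{\mathbf p}^{-1}D_n U_{\mathbf p}=D_n+\mathbf p$.
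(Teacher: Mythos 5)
Your proposal is correct, and the ellipticity/Fredholm and $\Lambda^*$-periodicity steps are essentially the paper's: the paper also deduces Fredholmness from ellipticity (via parametrix existence) and proves \eqref{eq:per} by exactly the observation that $e_{\mathbf k}\mathbf u\in H^1_{\bo}$ and $(\Dn+\lambda)\mathbf u=0 \Rightarrow (\Dn+\lambda+\mathbf k)(e_{\mathbf k}\mathbf u)=0$, which is the same intertwining $U_{\mathbf p}^{-1}D_n U_{\mathbf p}=D_n+\mathbf p$ you write down. The index-zero step is where the two proofs genuinely diverge. The paper invokes the already-computed spectrum \eqref{eq:spec0}: since $\Dnn+\mathbf k$ is \emph{invertible} for $\mathbf k\notin\Lambda^*+\{\mathbf 0,-\mathbf i\}$, it has index zero, and continuity of the index in $\mathbf k$ and then in $\alpha$ gives index zero for all $\Dn$. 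You instead write $\Dn = D_n^{(0)}+V$ with $V$ compact by Rellich, reduce $D_n^{(0)}=\diag_{2n}(2D_{\bar z})$ to scalar blocks, untwist the odd blocks by conjugating with $e^{i\langle z,\mathbf p_0\rangle}$ for suitable $\mathbf p_0\in\tfrac13\Lambda^*$ (turning them into $2D_{\bar z}+\mathbf p_0$ on genuinely periodic functions, a compact perturbation of $2D_{\bar z}$), and then compute $\ker 2D_{\bar z}=\operatorname{coker}2D_{\bar z}=\CC$ on the torus directly. Both are correct: yours is more self-contained (it does not presuppose \eqref{eq:spec0}) at the cost of having to carefully track the twist characters block by block, whereas the paper's is shorter because it quotes the explicit spectral computation already in hand.
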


\begin{proof}  
  Since $ D_{\bar z } $ is an elliptic operator in dimension 2,
  existence of parametrices (cf. \cite[Proposition E.32]{dyatlov2019mathematical}) implies the Fredholm property (cf. 
  \cite[\S C.2]{dyatlov2019mathematical} for Fredholm operators). In view of \eqref{eq:spec0},  $  \Dnn + \mathbf k $
  is invertible for  $ \kk \notin \Lambda^*+\{\bo, -\bi\} $ and hence
   $ \Dnn : H^1_{\bo}  
  \to L^2_{\bo} $ is an operator of
  index $ 0 $. The same is true for the Fredholm family
  $ \Dn $.
  
  To see \eqref{eq:per}, note that 
  if $ ( \Dn + \lambda ) \mathbf{u} = 0 $ with $\mathbf u \in H^1_{\bo}$,
  then $ ( \Dn + \lambda + \mathbf k ) ( e_{\mathbf k} \mathbf u)  = 0 $ with $e_{\mathbf k} \mathbf u \in H^1_{\bo} $ for any $ \mathbf k \in \Lambda^* $.
\end{proof}

Therefore for each $ \mathbf k $, the operator $ H_{n,\mathbf k} ( \alpha; \bt ) $
is an elliptic differential system and hence it has a discrete spectrum that then describes the spectrum of $ \Hn $ on $ L^2 ( \CC ) $:
\begin{equation}
\label{eq:specH} 
\begin{gathered}    \Spec_{L^2 ( \CC) }   \Hn  = \bigcup_{ \mathbf k \in \CC / \Lambda^* }
\Spec_{L^2_{\bo} }  \Hkn , \\ 
\Spec_{L^2_{\bo} } \Hkn = \{ \pm E_j 
( {\mathbf k } , \alpha, \bt ) \}_{ j=1}^\infty, \ \  E_{ j+1 } ( \mathbf k, \alpha, \bt ) \geq E_j ( \mathbf k, \alpha, \bt ) 
\geq 0.
\end{gathered} 
\end{equation}
Note that sometimes we omit $(\alpha, \bt)$ and simply write $H_{n,\mathbf k}, D_n$ and $E_j(\mathbf{k})$ when there is no ambiguity.
To conclude the second identity in \eqref{eq:specH}, note that
\begin{equation}
\label{eq:H-resolvent}
    (\lambda - \mathscr A )^{-1} = \begin{pmatrix}  ( \lambda^2 - A^* A )^{-1} & 0 
\\ 0 & ( \lambda^2 - A A^* )^{-1} \end{pmatrix} \begin{pmatrix}
 \lambda & A^* \\
 A & \lambda \end{pmatrix}, \ \  \mathscr A := \begin{pmatrix}
 0 & A^* \\ A & 0 \end{pmatrix}.
\end{equation}
Hence, non-zero eigenvalues of $ H_{n,\kk} $ on $L^2_{\bo}$ are given by 
$ \pm $ the non-zero singular values
of $ D_{n} + \mathbf k $, i.e., eigenvalues of the operator
$ [ ( D^*_n + \bar {\mathbf{k}} ) ( D_n + \mathbf k) ]^{\frac12} $, counted with their multiplicities. 

The zero eigenvalue (if exists) of 
$ ( D^*_n + \bar{\mathbf k} ) ( D_n + \mathbf k)$ also has the 
same multiplicity as the zero eigenvalue of
$ ( D_n + \mathbf k ) ( D^*_n + \bar{\mathbf k}) $, 
so that eigenvalues $ E_j (\mathbf k) = 0$ are included
exactly twice (for $ \pm $), this follows from the identity
\begin{equation}
  \label{eq:ker-Dn}
  \dim \ker_{{ L^2_{\bo}  }} (D_n + \mathbf k)  = \dim \ker_{
{ L^2_{\bo} }}  (D_n^* + \bar{\mathbf k}) ,
\end{equation}
which follows from Proposition \ref{p:slight}: the operator 
$ D_n + \mathbf k $ is a Fredholm operator of index $0$.

Combining with the equation \eqref{eq:Dba}, The above discussion yields three equivalent characterizations of the existence of a \emph{flat band} at energy zero:
\begin{prop}
\label{p:flat}
The following statements are equivalent for $\alpha\in \CC$: 
\begin{enumerate}
  \item  $0 \in \bigcap_{ \mathbf k \in \CC } \Spec_{{ L^2_{\bo}}}  H_{n,\kk}(\alpha;\bt)$
  \item $E_0 ( \mathbf k , \alpha, \bt ) = 0 \text{ for all $ \mathbf k \in \CC/\Lambda^* $.}$
  \item $\Spec_{{ L^2_{\bo} }}  D_{n}( \alpha; \bt )= \CC$.
\end{enumerate}
\end{prop}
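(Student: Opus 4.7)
The plan is to verify the three conditions are equivalent by a short cycle, using only the spectral enumeration \eqref{eq:specH}, the off-diagonal block structure of $\Hkn$ from \eqref{eq:Dba}, the kernel-dimension identity \eqref{eq:ker-Dn}, and the $\Lambda^*$-periodicity of $\Spec_{L^2_{\bo}} \Dn$ from Proposition \ref{p:slight}.

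For \emph{(1) $\Leftrightarrow$ (2)}, the enumeration in \eqref{eq:specH} shows that $\Spec_{L^2_{\bo}} \Hkn = \{\pm E_j(\kk,\alpha,\bt)\}_{j\geq 1}$ with $0 \leq E_1(\kk,\alpha,\bt) \leq E_2(\kk,\alpha,\bt) \leq \cdots$. Hence $0 \in \Spec_{L^2_{\bo}} \Hkn$ iff $E_1(\kk,\alpha,\bt) = 0$, which is precisely the flat-band condition in (2) (interpreting the $E_0$ there as the smallest non-negative band). Since $\Hkn$ depends on $\kk$ only modulo $\Lambda^*$, quantifying over $\kk \in \CC$ or over $\kk \in \CC/\Lambda^*$ is harmless.

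For \emph{(1) $\Leftrightarrow$ (3)}, the key point is that $0 \in \Spec_{L^2_{\bo}} \Hkn$ if and only if $-\kk \in \Spec_{L^2_{\bo}} \Dn$. In one direction, any null vector $(u,v)^\top$ of $\Hkn$ satisfies $(\Dn + \kk)u = 0$ and $(\Dn^* + \bar\kk)v = 0$ by \eqref{eq:Dba}; by \eqref{eq:ker-Dn} the two kernels have the same dimension and are nontrivial together, while conversely any $u \in \ker(\Dn + \kk)$ yields $(u,0)^\top \in \ker \Hkn$. Quantifying over $\kk \in \CC$ converts (1) into the statement that every complex number belongs to $\Spec_{L^2_{\bo}} \Dn$ (after the change of variables $\kk \mapsto -\kk$), which is (3). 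The $\Lambda^*$-periodicity of Proposition \ref{p:slight} keeps the different quantifier conventions consistent.

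There is no genuine obstacle: the content is bookkeeping once the Bloch--Floquet reduction and the Fredholm index $0$ property of $\Dn + \kk$ have been established earlier. The only points requiring mild care are matching the index $E_0$ in (2) with the $E_1$ of the enumeration in \eqref{eq:specH}, and reconciling the three different ways $\kk$ enters the statements (as $\kk$, as $-\kk$, and modulo $\Lambda^*$), both of which are handled automatically by the periodicity.
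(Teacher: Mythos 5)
Your argument is correct and matches what the paper does: the paper states Proposition \ref{p:flat} without a formal proof block, as a direct consequence of the spectral enumeration \eqref{eq:specH}, the block structure of $H_{n,\kk}$ in \eqref{eq:Dba}, the resolvent identity \eqref{eq:H-resolvent}, and the kernel-dimension equality \eqref{eq:ker-Dn} coming from the Fredholm index-zero property of $D_n+\kk$ (Proposition \ref{p:slight}). You have simply written out the same chain of reductions explicitly, including the correct observation that $E_0$ in item (2) is a notational slip for the bottom band $E_1$ of \eqref{eq:specH}.
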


\section{Magic angles of the Twisted multilayer Hamiltonian}
\label{sec:magic}
The {\em magic angles} are defined as the angles $ 1/\alpha $'s at which 
\begin{equation}
\label{eq:defmal}
0 \in \bigcap_{ \mathbf k \in \CC } 
\Spec_{ L^2_{\mathbf{0}} } H_{n,\mathbf k } ( \alpha; \bt )
\end{equation}
As discussed in Section \ref{sec:floquet}, the Hamiltonian $H_{n,\mathbf k} ( \alpha;\bt )$ comes from the {\em Floquet theory}
of $\Hn$ and \eqref{eq:defmal} means that $ \Hn $ has a
{\em flat band} at $0$ which is given by the equivalent condition:
\begin{equation}
  \label{eq:defmal2}
  0 \in \bigcap_{ \mathbf k \in \CC } \Spec_{ L^2_{\mathbf 0}}  H_{n,\mathbf k } ( \alpha; \bt )  \ 
  \Longleftrightarrow \ 
  \Spec_{ L^2_{\mathbf 0} }  \Dn = \CC.
\end{equation}

\subsection{Magic angles of TBG}
\label{sec:TBG}
We briefly recall several results on flat bands and magic angles of TBG from \cite{becker2022fine}.

Let $\mathcal A$ denote the sets of all magic angles of $D(\alpha)$ with $\alpha\in \CC$, then the spectrum of $D(\alpha)$ on $ L^2_{\mathbf 0 }$ is given by
\[    \Spec_{ L^2_{\mathbf 0 } }
D ( \alpha ) = \left\{ \begin{array}{ll} \Lambda^*   +  \{  \mathbf 0 , \mathbf i \} & 
\alpha \notin \mathcal A , \\
\ \ \ \ \mathbb C & \alpha \in \mathcal A. 
\end{array} \right. \]
This yields the following proposition: 
\begin{prop}
\label{prop:bands}
Suppose $ \alpha \in \mathbb C $ and $ E_1 ( \alpha, \mathbf k ) $ is 
defined using \eqref{eq:floquetbc2} for $ H(\alpha)$ given 
by \eqref{eq:defH} for $n=1$. Then 
\begin{equation}  
\label{eq:alphac1}  
\exists \, \ \mathbf k \notin \Lambda^* + \{ \mathbf 0, -\bi \}  \ \ \  E_1 ( \alpha, \mathbf k ) =  0 
\ \Longleftrightarrow \ \forall \, \ \mathbf k \in \mathbb C  \ \ \  E_1 ( \alpha, \mathbf k ) =  0 . 
\end{equation}
In other words, zero energy band is flat if and and if the Bloch eigenvalue is $ 0 $ at 
some $ \mathbf k \notin \Lambda^* + \{ \mathbf 0, - \mathbf i  \} $, 
which is the lattice of Dirac points $K, K'$.
\end{prop}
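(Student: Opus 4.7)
The plan is to reduce the claim to the spectral dichotomy for $\Spec_{L^2_{\mathbf 0}} D(\alpha)$ recalled at the beginning of Section \ref{sec:TBG}, namely that this spectrum equals either $\Lambda^* + \{\mathbf 0,\mathbf i\}$ (when $\alpha\notin\mathcal A$) or the whole of $\mathbb C$ (when $\alpha\in\mathcal A$). The backward implication in \eqref{eq:alphac1} is immediate: if $E_1(\alpha,\mathbf k)\equiv 0$ on $\mathbb C$, we may simply pick any $\mathbf k\notin \Lambda^* + \{\mathbf 0,-\mathbf i\}$.

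For the forward direction, suppose $E_1(\alpha,\mathbf k_0)=0$ for some $\mathbf k_0\notin \Lambda^*+\{\mathbf 0,-\mathbf i\}$. The block structure in \eqref{eq:Dba} together with the kernel identity \eqref{eq:ker-Dn} (both with $n=1$) show this is equivalent to $\ker_{L^2_{\mathbf 0}}(D(\alpha)+\mathbf k_0)\neq\{0\}$, so $-\mathbf k_0\in \Spec_{L^2_{\mathbf 0}}D(\alpha)$. Since $\Lambda^*=-\Lambda^*$, the hypothesis $\mathbf k_0\notin\Lambda^*+\{\mathbf 0,-\mathbf i\}$ is equivalent to $-\mathbf k_0\notin\Lambda^*+\{\mathbf 0,\mathbf i\}$; hence the spectrum of $D(\alpha)$ is strictly larger than the Dirac lattice. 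The dichotomy then forces $\Spec_{L^2_{\mathbf 0}} D(\alpha)=\mathbb C$, i.e., $\alpha\in\mathcal A$, and Proposition \ref{p:flat} with $n=1$ yields $E_1(\alpha,\mathbf k)=0$ for every $\mathbf k\in\mathbb C$.

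The main obstacle is of course the dichotomy itself, which I would borrow from \cite{becker2022fine} rather than reprove here. Its proof proceeds by applying analytic Fredholm theory to the holomorphic family $\mathbf k\mapsto D(\alpha)+\mathbf k$, which is Fredholm of index $0$ by Proposition \ref{p:slight}, and recasting the zero mode equation as a compact Birman--Schwinger problem whose reciprocal spectrum is precisely $\mathcal A$. Combined with the $\Lambda^*$-periodicity \eqref{eq:per} and the protected zero eigenstates at the two Dirac points (cf.\ Section \ref{sec:protect}), this pins the non-magic spectrum down to exactly $\Lambda^* + \{\mathbf 0, \mathbf i\}$ and identifies the magic case with $\Spec D(\alpha)=\mathbb C$; once that is in hand, the proposition reduces to the short reduction above.
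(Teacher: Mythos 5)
Your argument is correct and coincides with the paper's: the paper states the spectral dichotomy for $\Spec_{L^2_{\mathbf 0}}D(\alpha)$ (recalled from \cite{becker2022fine}) and simply writes ``This yields the following proposition,'' and your reduction spells out precisely that implication, including the sign-flip detail $\Lambda^*=-\Lambda^*$ that the paper leaves implicit.
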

The following result is useful when constructing flat band eigenfunctions using the theta function argument when the multiplicity of the flat band is one. 
\begin{prop}
\label{prop:zs}
Suppose that $ \alpha \in \mathcal A $.
If  $   \dim \ker_{L^2_{\mathbf 0} } D( \alpha ) = 1 $, then 
$ \mathbf u \in  \ker_{L^2_{\bf 0} } D ( \alpha ) $, $
 \mathbf u \not \equiv 0 $, has zero of order one at $ - z_S + \Lambda $ and no other zeros. 
\end{prop}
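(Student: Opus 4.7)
The plan is to realize $\mathbf u$ as a section of a line bundle of degree one over the elliptic curve $\CC/\Lambda$, whose unique simple zero is then pinned down by symmetry. Since $\alpha \in \mathcal A$, Proposition \ref{p:flat} gives $\ker_{L^2_{\bo}}(D(\alpha)+\kk)\neq\{0\}$ for every $\kk\in\CC$, and together with the simplicity assumption and the holomorphic Fredholm structure of Proposition \ref{p:slight}, this yields a holomorphic family $\kk\mapsto\mathbf u_\kk$ of zero modes on a neighborhood of $\bo$, with $\mathbf u=\mathbf u_{\bo}$.

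The key algebraic observation is that for any smooth scalar $f$ one has $D(\alpha)(f\mathbf u)=(2D_{\bar z}f)\,\mathbf u$, since the off-diagonal entries of $D(\alpha)$ are multiplication operators. Substituting the ansatz $\mathbf u_\kk=f_\kk\mathbf u$ into $(D(\alpha)+\kk)\mathbf u_\kk=0$ reduces the eigenvalue equation to the scalar equation $(2D_{\bar z}+\kk)f_\kk=0$ on the complement of the zero set of $\mathbf u$. Its solutions take the form $f_\kk(z)=e^{i\kk\bar z/2}g_\kk(z)$ with $g_\kk$ holomorphic away from the zeros of $\mathbf u$. The Floquet conditions $\mathscr L_{\mathbf a}\mathbf u=\mathbf u$ and $\mathscr L_{\mathbf a}\mathbf u_\kk=\mathbf u_\kk$ then force the quasi-periodicity $g_\kk(z+\mathbf a)=e^{-i\kk\bar{\mathbf a}/2}g_\kk(z)$, $\mathbf a\in\Lambda$, so $g_\kk$ is a meromorphic section of a flat (degree-zero) line bundle over the torus $\CC/\Lambda$ whose pole divisor is controlled by the zero divisor of $\mathbf u$.

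A Riemann--Roch count on $\CC/\Lambda$ then compares dimensions. For the ansatz to produce a nontrivial $\mathbf u_\kk$ for every $\kk$ in a neighborhood of $\bo$ while matching the one-dimensional kernel at $\kk=\bo$, the effective divisor bounding the poles of $g_\kk$ must have degree exactly one; hence $\mathbf u$ vanishes at a single point of $\CC/\Lambda$, and the zero is simple. To locate this zero at $-z_S$, I would use the rotational symmetry $\mathscr C$ (via \eqref{eq:CanD} with $J=\diag(1,\bar\omega)$ for $n=1$), the character of $\mathbf u$ under $\mathscr L_{\mathbf a}$, and the identity $U(\omega z)=\omega U(z)$. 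These equivariances force the zero locus of $\mathbf u$ to be a single $G$-orbit in $\CC/\Lambda$, which matches $-z_S+\Lambda$ by the same symmetry analysis that identifies $z_S$ as the common zero of the potential data arising in the off-diagonal entries $\alpha U(\pm z)$ of $D(\alpha)$.

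The main obstacle is the Riemann--Roch step in the second and third paragraphs: while the cocycle $\mathbf a\mapsto e^{-i\kk\bar{\mathbf a}/2}$ is genuinely flat, the fact that the space of meromorphic sections must be non-empty for all $\kk$ near zero constrains the divisor of $\mathbf u$ in a subtle way, and one must rule out degenerations of the line bundle at $\kk=\bo$ where the bundle becomes trivial. This is handled by using an explicit Jacobi theta function realizing the sections $g_\kk$, analogous to the construction in \cite[Section 3]{becker2022fine} for TBG, at which point the degree-one conclusion and the location $-z_S$ can be read off from the theta divisor.
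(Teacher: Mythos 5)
The paper does not prove this proposition; it is recalled from \cite{becker2022fine} (with the original proof in \cite{becker2020mathematics}), so I am assessing your proposal against that argument.

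Your central algebraic observation is correct and is indeed the engine of the known proof: since the off-diagonal entries of $D(\alpha)$ are multiplication operators, $D(\alpha)(f\mathbf u)=(2D_{\bar z}f)\mathbf u$ for $\mathbf u\in\ker D(\alpha)$, which reduces the $\kk$-family to the scalar $\bar\partial$-equation $(2D_{\bar z}+\kk)f_\kk=0$ and hence to quasi-periodic holomorphic sections $g_\kk$ over $\CC/\Lambda$. (Minor sign slip: with $D_{\bar z}=\tfrac1i\partial_{\bar z}$ the correct substitution is $f_\kk=e^{-i\kk\bar z/2}g_\kk$, giving the multiplier $g_\kk(z+\mathbf a)=e^{+i\kk\bar{\mathbf a}/2}g_\kk(z)$.) But two steps have genuine gaps.

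The degree count is not as stated. Knowing $\alpha\in\mathcal A$ gives $\ker_{L^2_\bo}(D(\alpha)+\kk)\neq 0$ for all $\kk$, but this does not by itself force your ansatz $f_\kk\mathbf u$ to produce a nontrivial element; you must first know that \emph{every} kernel element is of this form. That is a Wronskian-type argument: for $\mathbf v\in\ker(D(\alpha)+\kk)$ the quantity $w:=u_1v_2-u_2v_1$ satisfies $(2D_{\bar z}+\kk)w=0$ with multiplier $\bar\omega^{a_1+a_2}$, hence is a holomorphic section of a flat degree-zero bundle, so $w\equiv 0$ for generic $\kk$, and $\mathbf v=f_\kk\mathbf u$ pointwise. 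Only then does nonvanishing of $\mathbf u$ lead to a contradiction, and only then does the Riemann--Roch identity $h^0(L_\kk(Z))=\deg Z$ at $\kk=\bo$ force $\deg Z=1$ via the simplicity assumption. You acknowledge a gap here, but the missing ingredient is the Wronskian step rather than a degeneration of the flat bundle.

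More seriously, the location at $-z_S$ is not established. The $\mathscr C$-equivariance only constrains the single zero to lie at one of the three $\omega$-fixed points $\{0,z_S,-z_S\}$ of $\CC/\Lambda$. Distinguishing $-z_S$ requires comparing the $\mathscr C$-eigenvalue of $\mathbf u$ with the $\Lambda$-quasi-periodicity factor $\bar\omega^{a_1+a_2}$ at each fixed point $p$ (where $\omega p-p=\tfrac43\pi i\omega(a_1+\omega a_2)$): one obtains $u_1(\omega p)=\bar\omega^{a_1+a_2}u_1(p)$ and similarly for $u_2$, and only at the correct $p$ does the resulting linear constraint force $\mathbf u(p)=0$. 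Your appeal to ``the same symmetry analysis that identifies $z_S$ as the common zero of the potential data'' does not work: $U$ vanishes at all three stacking points (indeed $U(0)=\sum_k\omega^k=0$), and in any case the zero of a kernel element is not located at a zero of $U$. There is also a factual slip: for $n=1$ the matrix $J=\diag(1,1)=\id$; the $\bar\omega$ twist in the $\mathscr C$-action on the $D$-side comes from $\mathscr C'$, not from $J=\diag(1,\bar\omega)$.
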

In the above proposition, the point $z_{S}=\frac{1}{3}(\frac{4}{3} \pi i\omega ( -1+ \omega ))$ is a point of high symmetry (cf.\,\cite{becker2020mathematics}). Recall that $\Lambda = \frac{4}{3} \pi i\omega ( a_1+ \omega a_2 )$, so $z_S=\frac{1}{3}(-1,1)\in \frac{1}{3}\Lambda$.

\subsection{Magic angles of TMG}
In this section, we prove the following theorem, which states that magic angles of $\Hn$ coincides with magic angles of TBG.
\begin{theo}
  \label{thm:idd}
   For any $\alpha\in \CC$ and $D(\alpha)$ defined in equation \eqref{eq:defD1}, we have
   \begin{equation}
    \Spec_{ L^2_{\mathbf 0} }  \Dn = \Spec_{ L^2_{\mathbf 0} }  D(\alpha)
    = \left\{ \begin{array}{ll} \Lambda^*   +  \{  \mathbf 0 , \mathbf i \} & 
\alpha \notin \mathcal A , \\
\ \ \ \ \mathbb C & \alpha \in \mathcal A. 
\end{array} \right.
   \end{equation} 
\end{theo}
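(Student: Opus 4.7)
The strategy is to exploit the block-bidiagonal structure of $\Dn$ to reduce its eigenvalue equation on $L^2_{\bo}$ to that of $D(\alpha)$ for generic spectral parameter $\lambda$, and then handle a discrete exceptional set of $\lambda$ by a separate argument. Writing $v=(v_1,\ldots,v_n)$ with $v_k=(v_k^1,v_k^2)\in\CC^2$, and using that $T_+=\mathrm{diag}(1,0)$ and $T_-=\mathrm{diag}(0,1)$, the system $(\Dn-\lambda)v=0$ decouples into an upper triangular cascade for the first entries $\{v_k^1\}$ and a lower triangular cascade for the second entries $\{v_k^2\}$, coupled only at the top block through the off-diagonal terms $\alpha U(\pm z)$ of $D(\alpha)$.

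The second key observation is spectral: the $L^2_{\bo}$ constraint forces $v_k^1(z+\mathbf a)=\bar\omega^{a_1+a_2}v_k^1(z)$ and $v_k^2(z+\mathbf a)=v_k^2(z)$. A direct Fourier calculation then shows that $\Spec(2D_{\bar z})$ equals $\Lambda^*+\{\bi\}$ on the twisted subspace and $\Lambda^*$ on the untwisted one, so $2D_{\bar z}-\lambda$ is invertible on both precisely when $\lambda\in\Omega:=\CC\setminus(\Lambda^*+\{\bo,\bi\})$. For such $\lambda$, solving the upper cascade from the bottom gives $v_n^1=0$ and inductively $v_k^1=0$ for all $k\geq 2$; solving the lower cascade from the top gives $v_k^2=-t_{k-1}(2D_{\bar z}-\lambda)^{-1}v_{k-1}^2$, uniquely determined by $v_1^2$. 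What remains of the $k=1$ equations is exactly $D(\alpha)v_1=\lambda v_1$, yielding a linear isomorphism $\ker_{L^2_{\bo}}(\Dn-\lambda)\xrightarrow{\sim}\ker_{L^2_{\bo}}(D(\alpha)-\lambda)$ for $\lambda\in\Omega$.

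To finish: off $\Lambda^*+\{\bo,\bi\}$ the two spectra agree. If $\alpha\in\mathcal A$ then $\Spec D(\alpha)=\CC$, so $\Spec\Dn\supseteq\Omega$, and closedness of the spectrum promotes this to $\Spec\Dn=\CC$. If $\alpha\notin\mathcal A$ then $\Spec D(\alpha)=\Lambda^*+\{\bo,\bi\}$, hence $\Spec\Dn\cap\Omega=\emptyset$; the reverse inclusion $\Lambda^*+\{\bo,\bi\}\subseteq\Spec\Dn$ follows from Proposition~\ref{prop:prot} (protected kernel elements in $L^2_{1,0}$ and $L^2_{0,[1-n]}$), the identification $L^2_{k,p}\subset L^2_{\mathbf p}$ via $u\mapsto e^{i\langle z,\mathbf p\rangle}\tilde u$ so that an element of $\ker\Dn$ at $\mathbf p$ becomes an eigenfunction of $\Dn$ at $-\mathbf p$ in $L^2_{\bo}$, and the $\Lambda^*$-periodicity of Proposition~\ref{p:slight}. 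The main obstacle is the matching of discrete sets: the set on which the cascade inversion fails must coincide with the base spectrum of $D(\alpha)$ for non-magic $\alpha$; verifying this, together with correctly assigning twisted versus untwisted periodicity to each half of a block, is the only delicate point before the bookkeeping becomes straightforward.
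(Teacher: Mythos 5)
Your argument is correct and is essentially the paper's proof, just presented as an explicit cascade solution of the component equations rather than as the equivalent operator factorization $D_n(\alpha;\mathbf t)-\mathbf k=(\id+J_{\mathbf t,+})\,\diag(D(\alpha)-\mathbf k,\,D(0)-\mathbf k,\dots)\,(\id+J_{\mathbf t,-})$ in equation~\eqref{eq:Dn_decomp}; the key observation that $T_+$ and $T_-$ are complementary projectors so $T_+R_0(\mathbf k)T_-=0$, the invertibility of $2D_{\bar z}-\lambda$ off $\Lambda^*+\{\mathbf 0,\mathbf i\}$ on each twisted and untwisted component, and the treatment of the exceptional set via Propositions~\ref{prop:prot} and~\ref{p:slight} are all the same. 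One minor point worth making explicit when passing from the kernel isomorphism to equality of spectra on $\Omega$ is that both $D_n-\lambda$ and $D(\alpha)-\lambda$ are Fredholm of index $0$ (Proposition~\ref{p:slight}), so triviality of the kernel is equivalent to invertibility.
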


\begin{proof}
By Proposition \ref{prop:prot} and \ref{p:slight}, for any $\alpha\in\CC$ both $\Spec_{ L^2_{\bo} }  \Dn$ and $\Spec_{ L^2_{\bo} }  D(\alpha)$ contains $\Lambda^*+\{\bo, \bi\}$ as the spectrum and they are both $\Lambda^*$-periodic for spectrum of $D(\alpha)$). Therefore we only consider $\mathbf k\notin \Lambda^*+\{\bo, \bi\}$. 

We define the following matrices
\begin{gather*}
     J_{\bt,+}  
     := \left(\begin{matrix} 0 & t_1T_+R_0(\kk) & \\  & 0 & t_2T_+ R_0(\kk) & \\ &  & 0 & \ddots \\ && \ddots && \\ &&&& t_{n-1}T_+ R_0(\kk) \\ &&&  & 0 \end{matrix}\right),
\end{gather*}
\begin{gather*}
    J_{\bt,-} 
    := \left(\begin{matrix} 0 & & \\ t_1T_- R_0(\kk) & 0 & & \\ & t_2T_- R_0(\kk) & 0 & \ddots \\ && \ddots && \\ &&&& \\ &&& t_{n-1}T_- R_0(\kk) & 0 \end{matrix}\right),
\end{gather*}
where $R_0(\kk):= (D(0)-\kk)^{-1}: L^2_{\bo} \rightarrow H^1_{\bo}$ for $\kk\notin \Lambda^*+\{\bo,\bi\}$.
We only need to show that for any $\alpha\in \CC$ fixed $\Dn-\mathbf k$ is invertible if and only if $D ( \alpha ) - \mathbf k$ is invertible for $\kk\notin \Lambda^*+\{\bo,\bi\}$. Following from a direct computation and the identity $T_+ R_0(\kk) T_-=0$, we have the following decomposition
\begin{equation}
\label{eq:Dn_decomp}
  \begin{split}
    D_n( \alpha; \mathbf{t} ) - \mathbf k
    =&( \id + J_{\bt,+} ) 
    \diag(D(\alpha)-\kk, D(0)-\kk, \cdots, D(0)-\kk) (\id + J_{\bt,-}).
  \end{split}
\end{equation}
Since $\Spec_{ L^2_{\mathbf 0}}  D(0)= \Lambda^*+\{\bo,\bi\}$, the operator $\Dn-\mathbf k$ is invertible if and only if $D ( \alpha ) - \mathbf k$ is invertible for $\kk\notin \Lambda^*+\{\bo,\bi\}$.
\end{proof}

\begin{rem}
    The decomposition \eqref{eq:Dn_decomp} also concludes that the multiplicity of the flat band of $\Dn$ is the same as the multiplicity of the flat band of $D(\alpha)$.
\end{rem}

\section{Band separation with inter-layer tunneling}
\label{sec:separation}
We discuss, using the perturbation theory, the band separation mechanism appeared in \cite{wang2022hierarchy} once the tunneling $\mathbf{t}$ being turned on. We set up a Grushin problem and compute the separation of the band explicitly. For a detailed introduction on the Grushin problem, we refer to \cite[Appendix C]{dyatlov2019mathematical}. 

\begin{figure}
  \includegraphics[height=5.5cm]{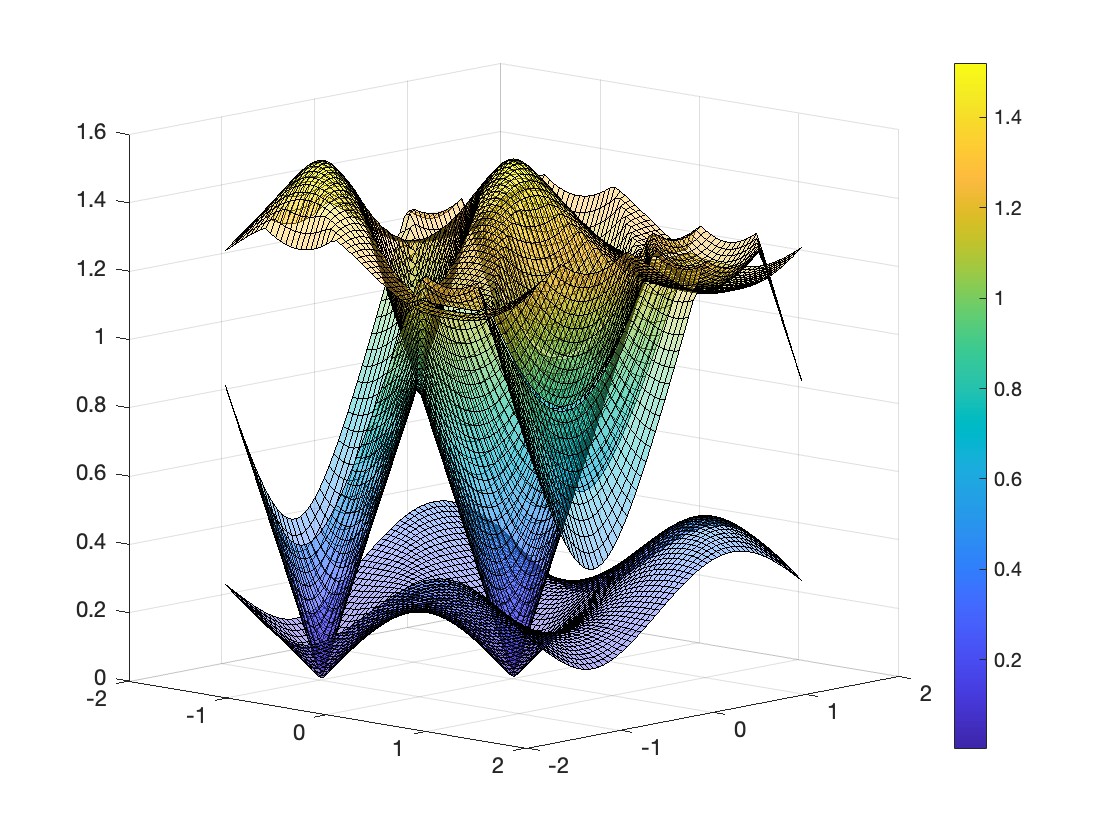}
  \includegraphics[height=5.5cm]{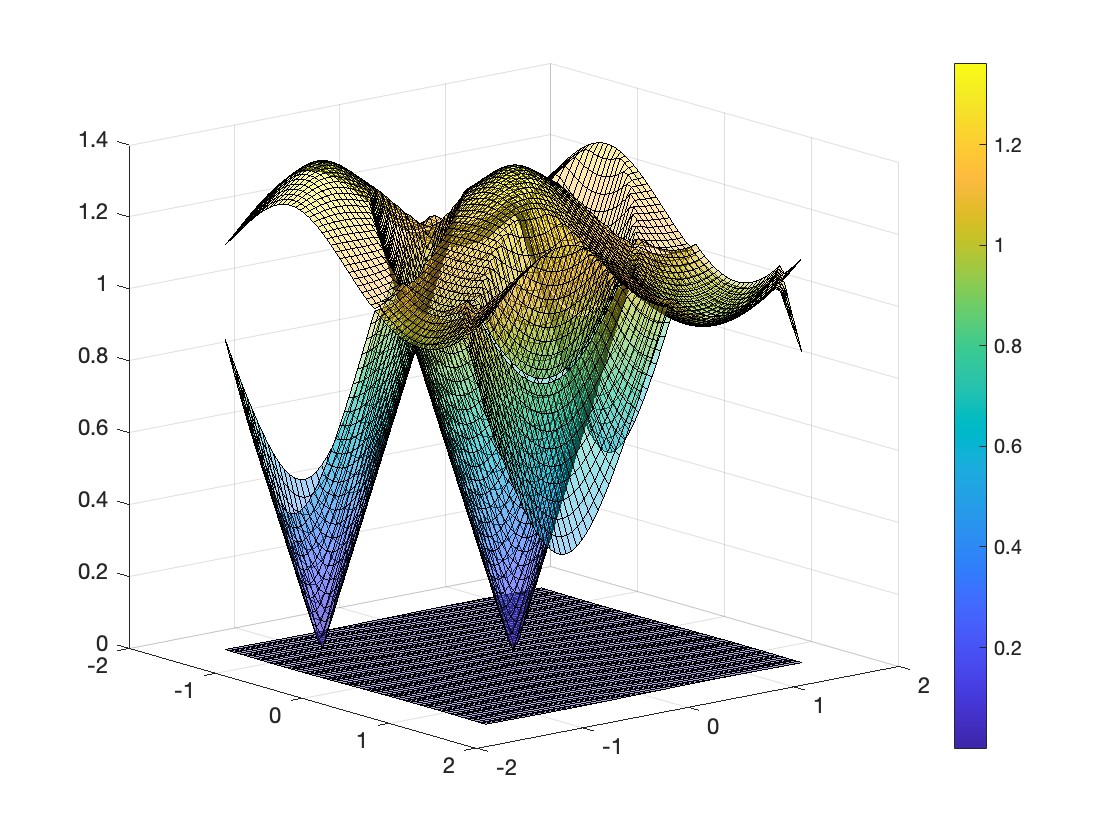}
  \includegraphics[height=5.5cm]{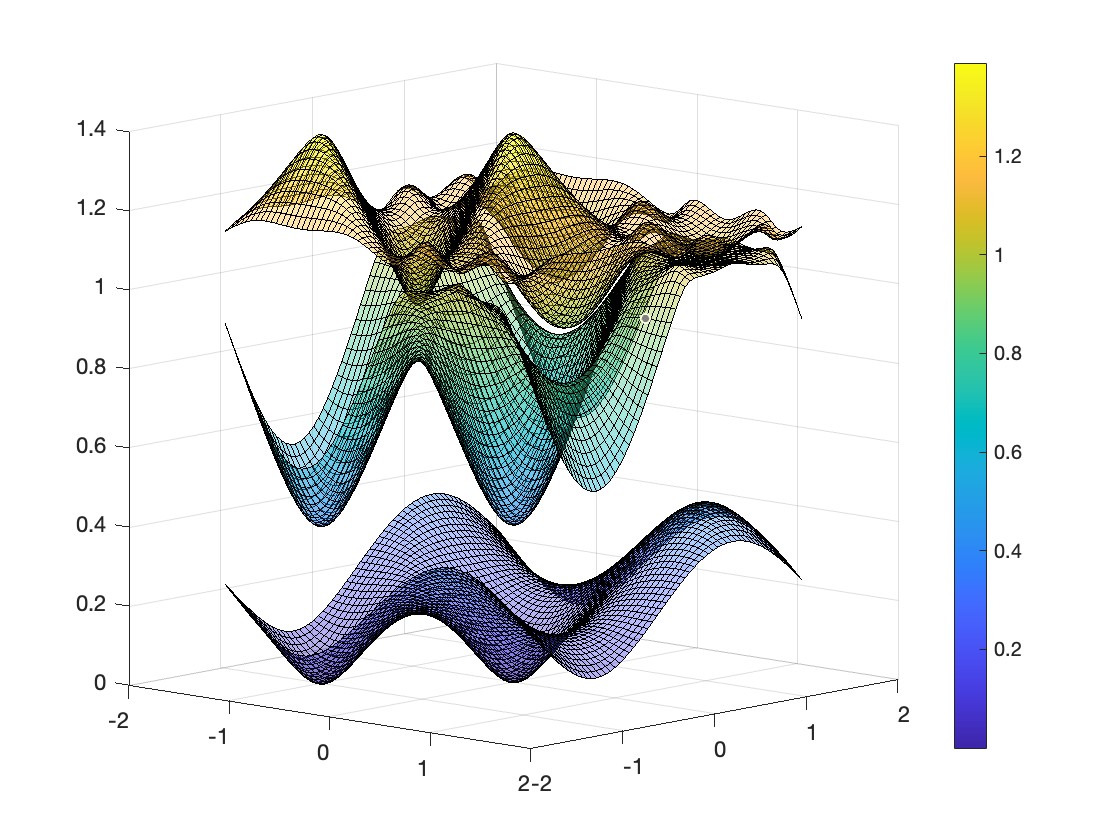}
  \includegraphics[height=5.5cm]{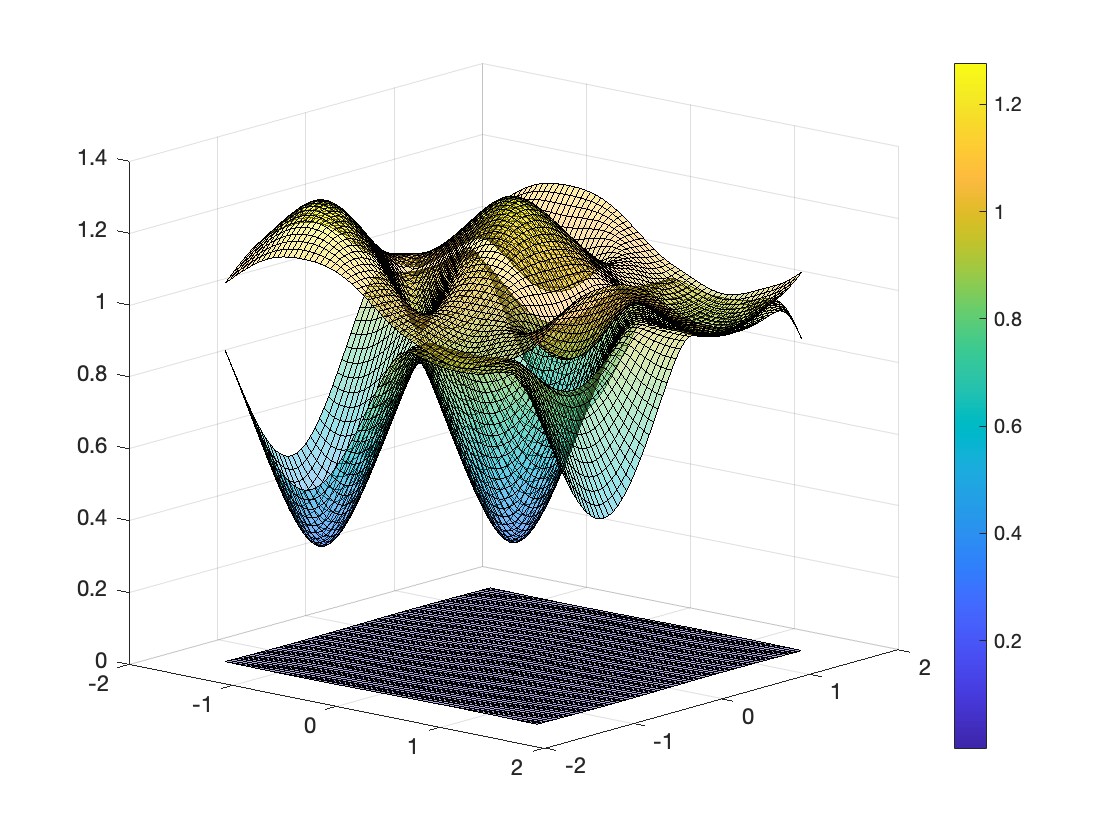}
  \caption{First three band structure $\kk\mapsto E_j(\alpha,\kk, t)$ of $H_2(\alpha; t)$ (cf.\,equation \eqref{eq:new_Floquet}) away from (left) and at the first magic angle of $H(\alpha)$ (right). Top row: $t=0$; there exist two Dirac cones at points $K,K'$. Bottom row: $t=0.5$; the Dirac bands separate from the flat band and a band gap is open.}
  \label{fig:bandsep}
\end{figure}

\subsection{A Grushin Problem}
Following the identity \eqref{eq:H-resolvent}, eigenvalues of $H_{n,\kk}(\alpha; \bt)$ are given by singular values of $\Dn + \mathbf k $, i.e., eigenvalues of 
$ [ ( D_n^*(\alpha; \bt) + \bar{\mathbf k} ) ( \Dn + \mathbf k) ]^{\frac12} $, counted with multiplicities. For simplicity purposes, we work with the case $n=2$, the general case is essetially the same. We first introduce the Schur complement formula:

\begin{prop}
  \label{prop:Grushin1}
  Suppose 
  \begin{align}\label{2.2.1}
  \begin{pmatrix}
  P&R_-\\
  R_+&R_{+-}
  \end{pmatrix}
  = 
  \begin{pmatrix}
  E&E_+\\
  E_-&E_{-+}
  \end{pmatrix}^{-1}:X_1\times X_-\to X_2\times X_+  
  \end{align}
  are bounded operators on Banach spaces, then $P$ is invertible if and only if $E_{-+}$ is invertible.
  Moreover, in such case we have
  \begin{align}
    \label{Grushin}
      P^{-1}=E-E_+E_{-+}^{-1}E_-,\quad E_{-+}^{-1}=R_{+-}-R_+P^{-1}R_-.
  \end{align}
\end{prop}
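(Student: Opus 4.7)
The plan is to prove Proposition~\ref{prop:Grushin1} by the standard block $LDU$ factorization argument (Gaussian elimination on $2\times 2$ block matrices), which is the textbook route to the Schur complement formula. First I would unpack the hypothesis \eqref{2.2.1} into the eight block identities obtained from the two products
\[
\begin{pmatrix}P&R_-\\ R_+&R_{+-}\end{pmatrix}\begin{pmatrix}E&E_+\\ E_-&E_{-+}\end{pmatrix}=I,\qquad \begin{pmatrix}E&E_+\\ E_-&E_{-+}\end{pmatrix}\begin{pmatrix}P&R_-\\ R_+&R_{+-}\end{pmatrix}=I,
\]
since each of these eight relations among $P$, $R_\pm$, $R_{+-}$, $E$, $E_\pm$, $E_{-+}$ will be used to read off the explicit inverses.

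Assuming $P$ invertible, I would factor
\[
\begin{pmatrix}P&R_-\\ R_+&R_{+-}\end{pmatrix}=\begin{pmatrix}I&0\\ R_+P^{-1}&I\end{pmatrix}\begin{pmatrix}P&0\\ 0&R_{+-}-R_+P^{-1}R_-\end{pmatrix}\begin{pmatrix}I&P^{-1}R_-\\ 0&I\end{pmatrix}.
\]
The two outer triangular factors are invertible with explicit triangular inverses, and by hypothesis the full block matrix is invertible with inverse given by \eqref{2.2.1}. Hence the middle diagonal factor, and therefore the Schur complement $R_{+-}-R_+P^{-1}R_-$, must be invertible; inverting the factorization and comparing entries with the right-hand side of \eqref{2.2.1} identifies this Schur complement with $E_{-+}^{-1}$, which is the second formula in \eqref{Grushin}.

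Conversely, if $E_{-+}$ is invertible I would perform the analogous $LDU$ factorization of the other block matrix, whose Schur complement is $E-E_+E_{-+}^{-1}E_-$, then invert and compare with the left-hand block matrix to conclude that $P$ is invertible with $P^{-1}=E-E_+E_{-+}^{-1}E_-$. As a cross-check, both formulas in \eqref{Grushin} can be derived purely algebraically from the eight block identities --- for instance, solving $PE_+ + R_-E_{-+}=0$ for $R_-$ (after $P$ has been inverted) and substituting into $R_+E_+ + R_{+-}E_{-+}=I$ yields $E_{-+}^{-1}=R_{+-}-R_+P^{-1}R_-$ directly, bypassing the factorization.

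This is a standard result with no genuine obstacle; the only care required is bookkeeping of domains and codomains, since the blocks act between possibly different Banach spaces $X_1,X_2,X_\pm$ and the triangular factorizations must respect the splittings $X_1\times X_-$ and $X_2\times X_+$. Once proved, Proposition~\ref{prop:Grushin1} serves as the workhorse for Section~\ref{sec:separation}: applied to a finite-rank Grushin problem attached to the flat-band eigenfunctions of $H_{2,\kk}(\alpha;0)$ at $\kk=K'$, the effective operator $E_{-+}(t)$ becomes a small matrix whose kernel (or determinant) expansion in $t$ will deliver the $Ct+\mathcal{O}(t^{3/2})$ asymptotics claimed in Theorem~\ref{thm:2}.
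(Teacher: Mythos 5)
Your block $LDU$/Schur-complement argument is correct; the paper does not give its own proof but simply cites \cite[Appendix C.1]{dyatlov2019mathematical}, which proves this standard lemma by the same circle of ideas (the block identities and the Schur complement). One small slip in your final algebraic cross-check: you should solve $PE_+ + R_-E_{-+}=0$ for $E_+ = -P^{-1}R_-E_{-+}$ (not for $R_-$), then substitute into $R_+E_+ + R_{+-}E_{-+}=I$ to obtain $(R_{+-}-R_+P^{-1}R_-)E_{-+}=I$, with the companion relations from the reverse product order ($E_-P+E_{-+}R_+=0$ and $E_-R_-+E_{-+}R_{+-}=I$) supplying the two-sided inverse.
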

\begin{proof}
  See \cite[Appendix C.1]{dyatlov2019mathematical}.
\end{proof}

Define $D_{\kk}(\alpha):= D(\alpha)+\kk$ and $D_{2,\kk}(\alpha; t):= D_2(\alpha; t)+\kk$. We have 
$$D_{2,\kk}(\alpha; t)= D_{2,\kk}(\alpha; 0)+ tT, \ \ \ T= \begin{pmatrix}  
  0 & T_+
  \\ T_- & 0 
\end{pmatrix}.$$
We consider the operator $P^{t}_{\kk}= D_{2,\kk}^*(\alpha; t) D_{2,\kk}(\alpha; t)$ given by 
\begin{equation*}
  \begin{split}
    P^{t}_{\kk}
    =&P^0_{\kk} +tQ +t^2 T^*T\\
    :=&D_{2,\kk}^*(\alpha; 0) D_{2,\kk}(\alpha; 0)+ t [T^*D_{2,\kk}(\alpha; 0)+ D_{2,\kk}^*(\alpha; 0)T ] + t^2 T^*T
  \end{split}
\end{equation*}
We start by considering the zero spectrum of the operator $P^0_{\kk} = D_{2,\kk}^*(\alpha; 0) D_{2,\kk}(\alpha; 0)$. By equation \eqref{eq:unpro1} and \eqref{eq:unpro2} and the discussion of \S \ref{sec:flo-rep}, $\ker D_2(\alpha;0)$ are given by 
\begin{equation}
  \Phi_1, \mathbf e_3\in L^2_{-\mathbf i},\ \ 
  \Phi_2, \mathbf e_4\in L^2_{\mathbf 0}.
\end{equation} 
Here we slightly abuse notations by restricting $\ker H_2(\alpha;0)=\{\Phi_1, \Phi_2, \mathbf e_3, \mathbf e_4\}$ to the first component $\ker D_2(\alpha;0)$ because of the direct sum decomposition \eqref{eq:kerH-decomp}. Following the spirit of Remark \ref{rem:ker_equiv}, we define
\begin{equation}
  \hat \Phi_1:= e^{i\langle z,\mathbf i\rangle} \Phi_1,\ \hat{\mathbf e}_3 := e^{i\langle z,\mathbf i\rangle} \mathbf{e}_3,\ \hat \Phi_2:= \Phi_2,\ \hat{\mathbf e}_4 := \mathbf{e}_4 \in L^2_{\mathbf 0}
\end{equation}
These are elements of $\ker_{L^2_{\mathbf 0}} H_{2,\kk}(\alpha; 0)$, or more precisely $\ker_{L^2_{\mathbf 0}} P^0_{\kk}$, for $\kk= -\mathbf{i}, \mathbf{0}$ respectively.

\subsection{Analysis at K'}
To see how zero spectrum changes as we turning on $t>0$, we consider the following Grushin problem for the perturbed operator $P^t_{\kk}-z$ at the Dirac point $\kk= K' = -\mathbf i$: 
\begin{equation}
  \label{eq:t-Grushin}
  \begin{pmatrix}
    P^t_{\kk}-z & R_-\\
    R_+ & 0
    \end{pmatrix}: H^1_{\mathbf{0}} \oplus \CC^2 \longrightarrow L^2_{\mathbf{0}} \oplus \CC^2
\end{equation}
with 
\begin{gather*}
  R_-: (u^{(1)}_-, u^{(2)}_-)^T \mapsto u^{(1)}_- \hat \Phi_1 + u^{(2)}_- \hat \be_3,\ 
  R_+: u \mapsto (\langle u, \hat \Phi_1\rangle, \langle u, \hat \be_3\rangle)^T. 
\end{gather*}
We use $\mathcal{I}$ to denote the set $\{\hat \Phi_{1}, \hat \be_{3}\}$. For $t=0$, the Grushin problem \eqref{eq:t-Grushin} is invertible with the inverse given by 
\begin{equation}
  \label{eq:Grushin-inverse}
  \mathcal{E} = 
  \begin{pmatrix}
    E & E_+\\
    E_- & E_{-+}
    \end{pmatrix}: L_{\bo}^2 \oplus \CC^2 \longrightarrow H_{\bo}^1 \oplus \CC^2
\end{equation}
with 
\begin{gather*}
    Ev= \sum_{u_j\notin \mathcal{I}}\frac{1}{z_j-z}\langle v, u_j\rangle u_j,\ \ E_+v_+= R_-v_+, \\
    E_-v= R_+v,\ \ 
    E_{-+}= \begin{pmatrix}
      z & \\
       & z
      \end{pmatrix} 
\end{gather*}
where the set $\{u_j\}\subset L^2_{\mathbf 0}$ is the set of all normalized eigenfunctions of $P_{\kk}^0$ with eigenvalues $\{z_j\}$. Note that by definition of $\Dn$ in equation \eqref{eq:defD}, the eigenspace of $P_{\kk}^0=D_{2,\kk}^*(\alpha;0)D_{2,\kk}(\alpha;0)$ is given by the direct sum of eigenspaces of $D_{\kk}^*(\alpha)D_{\kk}(\alpha)$ and $D_{\kk}^*(0)D_{\kk}(0)$. Therefore we can take a complete basis of eigenfunctions $\{u_j\}$ with any element $u_j$ is either of the form $(u^{(1)}_j, 0_{\CC^2})^T$ or $(0_{\CC^2}, u^{(2)}_j)^T$, where $\{u^{(1)}_j\}$ are eigenfunctions of $D_{\kk}^*(\alpha)D_{\kk}(\alpha)$ and $\{u^{(2)}_j\}$ are eigenfunctions of $D_{\kk}^*(0)D_{\kk}(0)$. To work with the perturbed operator $P^t_{\kk}$, we employ the following proposition. 
\begin{prop}
  \label{prop:Grushin2}
  For $|t|\ll 1$, the Grushin problem for $P^t_{\kk}-z$:
  $$\mathcal{P}^t=
    \begin{pmatrix}
      P^t_{\kk}-z & R_-\\
      R_+ & 0
    \end{pmatrix}$$ is invertible. If the inverse $(\mathcal{P}^t)^{-1}$ is given by $\begin{pmatrix}
        E^t & E^t_+\\
        E^t_- & E^t_{-+}
        \end{pmatrix}$
      , then 
      \begin{equation}
        \label{eq:Grushin}
      E^t_{-+}= E_{-+}+ \sum_{k=1}^{\infty} (-t)^k E_- (Q+tT)\big(E(Q+tT)\big)^{k-1}E_+
      \end{equation}
\end{prop}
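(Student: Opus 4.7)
The approach is a standard Schur-complement perturbation argument exploiting that $P^t_\kk - P^0_\kk = tQ + t^2 T^*T$ lives entirely in the upper-left block of the Grushin matrix. Concretely, I would write
\begin{equation*}
\mathcal{P}^t = \mathcal{P}^0 + t\,\mathcal{V}, \qquad
\mathcal{V} := \begin{pmatrix} Q + tT^*T & 0 \\ 0 & 0 \end{pmatrix},
\end{equation*}
and, using the invertibility of $\mathcal{P}^0$ supplied by \eqref{eq:Grushin-inverse}, factor
$\mathcal{P}^t = \mathcal{P}^0(\mathrm{Id} + t\,\mathcal{E}\mathcal{V})$. The proof then reduces to inverting $\mathrm{Id} + t\,\mathcal{E}\mathcal{V}$ by a Neumann series and reading off the bottom-right block.

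For convergence, I would check that $\mathcal{E}\mathcal{V}$ is bounded on the relevant spaces uniformly in small $t$. The operator $E$ is given by a spectral expansion in $L^2$-normalized eigenfunctions of $P^0_\kk$ with eigenvalues $z_j$ bounded away from $z$ near $0$; by elliptic regularity for the second-order elliptic operator $P^0_\kk - z$, $E$ maps $L^2_{\bo}$ into $H^2_{\bo}$ with norm controlled by the spectral gap. Since $Q = T^* D_{2,\kk}(\alpha;0) + D_{2,\kk}^*(\alpha;0)T$ is first order and $T^*T$ is a bounded matrix operator, $A := Q + tT^*T$ is bounded $H^1_{\bo} \to L^2_{\bo}$ uniformly in $|t|\le 1$. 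Hence $EA$ is bounded on $H^1_{\bo}$ with norm independent of small $t$, so for $|t|\ll 1$ the Neumann series
\begin{equation*}
(\mathrm{Id} + t\,\mathcal{E}\mathcal{V})^{-1} = \sum_{k=0}^\infty (-t)^k (\mathcal{E}\mathcal{V})^k
\end{equation*}
converges in operator norm, yielding $(\mathcal{P}^t)^{-1} = \sum_{k=0}^\infty (-t)^k (\mathcal{E}\mathcal{V})^k\,\mathcal{E}$.

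Finally I extract the $(-,+)$-entry. Because $\mathcal{V}$ has only its upper-left block nonzero, for $k\ge 1$
\begin{equation*}
(\mathcal{E}\mathcal{V})^k = \begin{pmatrix} (EA)^k & 0 \\ E_- A (EA)^{k-1} & 0 \end{pmatrix},
\end{equation*}
and multiplying by $\mathcal{E}$ on the right gives bottom-right entry $E_- A (EA)^{k-1} E_+$. Adding the $k=0$ contribution $E_{-+}$ yields the displayed formula \eqref{eq:Grushin}, with the understanding that the factor written $Q+tT$ in the statement refers to $A = Q + tT^*T$ coming from the expansion $P^t_\kk - P^0_\kk = t(Q+tT^*T)$.

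The main obstacle is the boundedness needed to justify the Neumann series: $Q$ carries first-order derivatives, so one must really use the smoothing $L^2\!\to\!H^2$ of $E$, together with the uniform spectral gap $\inf_j|z_j - z|>0$ for $z$ in a neighborhood of $0$ avoiding the nonzero spectrum of $P^0_\kk$. Once that mapping property is in hand, the algebra is routine block-matrix bookkeeping.
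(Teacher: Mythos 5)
Your proof is correct and supplies the full Neumann-series details behind the standard Grushin perturbation formula; the paper simply cites \cite[Proposition 2.12]{zworskipde} for this, so you are giving the proof of the cited black box rather than taking a different route. You also correctly flag that, since $P^t_\kk - P^0_\kk = t(Q + tT^*T)$, the symbol $T$ appearing in \eqref{eq:Grushin} should be read as $T^*T$ (consistent with how the paper later evaluates $E_-TE_+$ via $\langle T u, T v\rangle$ in \eqref{eq:t2-T}), which is a useful clarification of the statement.
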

\begin{proof}
  This follows directly from \cite[Propsition 2.12]{zworskipde}.
\end{proof}
Proposition \ref{prop:Grushin2} yields that
\begin{equation*}
  \begin{split}
    E^t_{-+} & =  
    E_{-+} - t E_-Q E_+ + t^2E_{-}(QEQ-T)E_{+}+\mathcal{O}(t^3), 
  \end{split}
\end{equation*}
with 
\begin{equation}
  \label{eq:grushin-1}
  E_-Q E_+ = 
    \begin{pmatrix}
      \langle  Q \hat \Phi_1, \hat \Phi_1\rangle & \langle  Q \hat \be_3, \hat \Phi_1\rangle \\
      \langle  Q \hat \Phi_1, \hat \be_3\rangle & \langle  Q \hat \be_3, \hat \be_3 \rangle
    \end{pmatrix} 
\end{equation}
where the inner product is taken in the space $L^2_{\bo}(\CC/3\Lambda; \CC^4)$. The expression of $Q$ and the fact that $\hat \Phi_{1}$ and $\hat \be_{3}$ being zero eigenfunctions of $D_{2,\kk}(\alpha; 0)$ at $\kk=K'$ yields that equation \eqref{eq:grushin-1} vanishes.

Now we compute the second order perturbation term in $t$. Writing $\hat{\Phi}_1 = (\varphi_1, 0_{\CC^2})^T$ and $\hat{\be}_3= (0_{\CC^2},e_1)^T$ where $\varphi_1, e_3\in L^2_{\bo}(\CC/3\Lambda; \CC^2)$, we have the following identities
\begin{equation*}
    \begin{gathered}
        T \hat{\Phi}_1= (0_{\CC^2},T_-\varphi_1)^T,\ \  T \hat{\be}_3= (T_+e_1, 0_{\CC^2})^T.
    \end{gathered}
\end{equation*}
The operator $E_-TE_+: \CC^2\rightarrow \CC^2$ is then given by the matrix
\begin{equation}
\label{eq:t2-T}
   \begin{split}
   \begin{pmatrix}
      \langle T\hat{\Phi}_1, T\hat{\Phi}_1 \rangle & \langle T\hat{\Phi}_1, T\hat{\be}_3 \rangle \\
      \langle T\hat{\be}_3, T\hat{\Phi}_1 \rangle  &  \langle T\hat{\be}_3, T\hat{\be}_3 \rangle
    \end{pmatrix} 
    =
    \begin{pmatrix}
      \| T_-\varphi_1\|^2 & 0 \\
      0  &  \| T_+e_1\|^2
    \end{pmatrix},
   \end{split}
\end{equation}
while the operator $E_{-}QEQE_{+}: \CC^2 \rightarrow \CC^2$ is given by the matrix
\begin{equation*}
   \begin{split}
   \sum_{u_j\notin\mathcal{I}}\frac{1}{z_j-z}
    \begin{pmatrix}
      |\langle T \hat{\Phi}_1, D_{2,\kk}(\alpha;0)u_j \rangle|^2 & \langle T \hat{\be}_3, D_{2,\kk}(\alpha;0)u_j \rangle  \langle D_{2,\kk}(\alpha;0)u_j, T \hat{\Phi}_1 \rangle  \\
      \langle T \hat{\Phi}_1, D_{2,\kk}(\alpha;0)u_j \rangle \langle  D_{2,\kk}(\alpha;0)u_j, T \hat{\be}_3 \rangle  &  |\langle T \hat{\be}_3, D_{2,\kk}(\alpha;0)u_j \rangle|^2
    \end{pmatrix}.
   \end{split}
\end{equation*}
Writing $u_j=(u^{(1)}_j, u^{(2)}_j)^T$, the above matrix can be rewritten as
\begin{equation*}
   \begin{split}
   \sum_{u_j\notin\mathcal{I}}\frac{1}{z_j-z}
    \begin{pmatrix}
      |\langle T_- \varphi_1, D_{\kk}(0)u_j^{(2)} \rangle|^2 & \langle T_+e_1, D_{\kk}(\alpha)u_j^{(1)} \rangle \langle  D_{\kk}(0)u_j^{(2)}, T_- \varphi_1\rangle \\
      \langle D_{\kk}(\alpha)u_j^{(1)}, T_+e_1 \rangle \langle T_- \varphi_1,  D_{\kk}(0) u_j^{(2)}\rangle  &  |\langle T_+e_1, D_{\kk}(\alpha)u_j^{(1)} \rangle|^2
    \end{pmatrix}.
   \end{split}
\end{equation*}
By the direct sum decomposition of $P_{\kk}^0$, any eigenfunctions $u_j=(u^{(1)}_j, u^{(2)}_j)^T$ is of the form $(u^{(1)}_j, 0_{\CC^2})^T$ or $(0_{\CC^2}, u^{(2)}_j)^T$, the off-diagonal terms in the above matrix vanishes. It thus reduces to a sum of diagonal matrices
\begin{equation}
\label{eq:t2-term}
   \begin{split}
    \sum_{j}\frac{1}{z^{(2)}_j-z}
    \begin{pmatrix}
      |\langle T_- \varphi_1, D_{\kk}(0)u_j^{(2)} \rangle|^2 & 0 \\
      0  &  0
    \end{pmatrix} + \sum_{i}\frac{1}{z^{(1)}_i-z}
    \begin{pmatrix}
      0 & 0 \\
      0  &  |\langle T_+e_1, D_{\kk}(\alpha)u_i^{(1)} \rangle|^2
    \end{pmatrix}
   \end{split}
\end{equation}
in which $z^{(1)}_i\neq 0$ and $z^{(2)}_j\neq 0$ with
$$
(D^*_{\kk}(\alpha)D_{\kk}(\alpha)-z^{(1)}_i)u_i^{(1)}=0,\ \ 
(D^*_{\kk}(0)D_{\kk}(0)-z^{(2)}_j)u_j^{(2)}= 0.
$$
We consider the first term in the expression \eqref{eq:t2-term}. Note that $\{u^{(2)}_j\}$ are joint eigenfunctions of $D^*_{\kk}(0)D_{\kk}(0)$ and $D_{\kk}(0)$ with spectrum
$$\Spec_{L^2_{\bo}}D_{\kk}(0) = \Lambda^*+ \{\bo, -\bi\} \text{ for } \kk=-\bi$$
Therefore,
$$|\langle T_- \varphi_1, D_{\kk}(0)u_j^{(2)} \rangle|^2= z^{(2)}_j |\langle T_- \varphi_1, u_j^{(2)} \rangle|^2.$$
Using the fact that $\langle T_- \varphi_1, e_1 \rangle=0$ and Plancherel theorem, at $z=0$ we obtain the equality
\begin{equation}
\label{eq:t2-term1}
    \sum_{j}\frac{z^{(2)}_j}{z^{(2)}_j-z}
    \begin{pmatrix}
      |\langle T_- \varphi_1, u_j^{(2)} \rangle|^2 & 0 \\
      0  &  0
    \end{pmatrix}
    =
    \begin{pmatrix}
      \|T_- \varphi_1\|^2 & 0 \\
      0  &  0
    \end{pmatrix}   
\end{equation}
For $z=0$, combining equations \eqref{eq:t2-T} \eqref{eq:t2-term} and \eqref{eq:t2-term1}, the second order perturbation is given by
\begin{equation*}
   \begin{split}
   E_{-}(QEQ-T)E_{+} = 
    \begin{pmatrix}
      0 & 0 \\
      0  &  -\|T_+e_1\|^2+\sum_{i}\frac{1}{z^{(1)}_i}|\langle T_+e_1, D_{\kk}(\alpha)u_i^{(1)} \rangle|^2
    \end{pmatrix},
   \end{split}
\end{equation*}
recall that $\{z^{(1)}_i\}$ is the set of non-zero eigenvalues of $D^*_{\kk}(\alpha)D_{\kk}(\alpha)$ with $L^2_{\bo}$-eigenfunctions $\{u^{(1)}_i\}$. Therefore by Proposition \ref{prop:Grushin1}, \ref{prop:Grushin2}, zero is an eigenvalue for $P_{\kk}^t$ closed to $\kk=K'$. The zero eigenvalue corresponds to the flat band of $H_2(\alpha;t)$.

The eigenvalue of the operator $P^t_{\kk}$ corresponding the Dirac cone of $P^0_{\kk}$ is given by $$\lambda_1(t)=t^2\big[\|T_+e_1\|^2_{L^2}-\sum_{i}\frac{1}{z^{(1)}_i}|\langle T_+e_1, D_{\kk}(\alpha)u_i^{(1)} \rangle|^2\big]+\mathcal{O}(t^3),$$ 
which corresponds to the separation of the Dirac band from the flat band. Since this eigenvalue is given by the solution to the equation
$$z-t^2\Big[\|T_+e_1\|^2_{L^2}-\sum_{i}\frac{1}{z^{(1)}_i-z}|\langle T_+e_1, D_{\kk}(\alpha)u_i^{(1)} \rangle|^2\Big]=0,$$
a substitution $z=\lambda t^2$ and a geometric series argument yield the solution $z=\lambda_1(t)$.

Combining with the fact that there is a protected states at $0$ for any $t\in\RR$, this proves the following proposition:
\begin{prop}
\label{prop:sep}
    For $t\ll 1$ and at $\kk=K'$, the first two eigenvalues of $H_{2,\kk}(\alpha;t)$ are given by $E_{1,\kk}(t)=0$ and $E_{2,\kk}(t)=Ct +\mathcal{O}(t^{3/2})$, where the constant is given by
    $$C= \bigg(1-\sum_{i}\frac{1}{z^{(1)}_i}|\langle T_+e_1, D_{\kk}(\alpha)u_i^{(1)} \rangle|^2\bigg)^{1/2}.$$
\end{prop}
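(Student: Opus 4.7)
The plan is to exploit the identity \eqref{eq:H-resolvent}: the positive eigenvalues of $H_{2,\kk}(\alpha;t)$ are the square roots of the eigenvalues of $P^t_\kk := D^*_{2,\kk}(\alpha;t) D_{2,\kk}(\alpha;t)$. At $\kk = K' = -\mathbf i$ and $t = 0$, the kernel of $P^0_\kk$ is two-dimensional and spanned by $\{\hat\Phi_1, \hat\be_3\}$, where $\hat\Phi_1$ descends from the flat-band eigenfunction of $D(\alpha)$ (the protected state) and $\hat\be_3 = e^{i\langle z, \mathbf i\rangle}\be_3$ descends from the Dirac cone of $D(0)$ (the unprotected state). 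I would set up the $2$-dimensional Grushin problem \eqref{eq:t-Grushin} with $R_\pm$ tailored to this kernel, and then read both eigenvalues off the Schur complement $E^t_{-+}(z)$ via Proposition \ref{prop:Grushin1}.

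For the expansion in $t$, the Neumann series from Proposition \ref{prop:Grushin2} yields
\begin{equation*}
E^t_{-+}(z) \;=\; z\,\mathrm{Id}_{\CC^2} \;-\; t\, E_- Q E_+ \;+\; t^2\, E_-(QEQ - T)E_+ \;+\; \mathcal O(t^3).
\end{equation*}
The linear term vanishes: $\hat\Phi_1, \hat\be_3 \in \ker D_{2,\kk}(\alpha;0)$, and $Q = T^* D_{2,\kk}(\alpha;0) + D_{2,\kk}^*(\alpha;0) T$ therefore sends each of them into a vector orthogonal to the kernel. At second order, the block-diagonal decomposition $P^0_\kk = D^*_\kk(\alpha)D_\kk(\alpha) \oplus D^*_\kk(0)D_\kk(0)$ forces the cross-inner-products to vanish, so the $2\times 2$ matrix $E_-(QEQ - T)E_+$ is diagonal, reducing the problem to two scalar computations.

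The $(1,1)$-entry reads $-\|T_-\varphi_1\|^2 + \sum_j |\langle T_-\varphi_1, D_\kk(0) u^{(2)}_j\rangle|^2/z^{(2)}_j$ at $z=0$, with $\{u^{(2)}_j\}$ the joint spectral basis of $D_\kk(0)$ and $D^*_\kk(0) D_\kk(0)$ at $\kk = -\mathbf i$. Since $|\langle T_-\varphi_1, D_\kk(0) u^{(2)}_j\rangle|^2 = z^{(2)}_j |\langle T_-\varphi_1, u^{(2)}_j\rangle|^2$ and $T_-\varphi_1$ is orthogonal to $\be_1 \in \ker D_\kk(0)$, Plancherel collapses the sum to $\|T_-\varphi_1\|^2$, giving exact cancellation. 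This confirms the protected zero eigenvalue $E_1(t) \equiv 0$ predicted by Proposition \ref{prop:prot}. The $(2,2)$-entry does not cancel: it gives
\begin{equation*}
E^t_{-+,22}(z) \;=\; z \;-\; t^2\Bigl(\|T_+\be_1\|^2 - \textstyle\sum_i \tfrac{1}{z^{(1)}_i - z}\bigl|\langle T_+\be_1,\, D_\kk(\alpha) u^{(1)}_i\rangle\bigr|^2\Bigr) + \mathcal O(t^3),
\end{equation*}
and the substitution $z = \lambda t^2$ together with a contraction-mapping argument produces $\lambda_1(t) = C^2 t^2 + \mathcal O(t^3)$ as an eigenvalue of $P^t_\kk$. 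Taking the square root gives $E_2(t) = Ct + \mathcal O(t^{3/2})$, noting $\|T_+\be_1\|^2 = 1$.

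The main obstacle I anticipate is showing $C^2 > 0$, so that $C$ is a real positive constant. Since $\|T_+\be_1\|^2 = 1$ equals the full $L^2_{\bo}$-norm squared and the subtracted sum represents the Bessel-type projection of $T_+\be_1$ onto the range of $D_\kk(\alpha)$ (equivalently, onto the orthogonal complement of $\ker D^*_\kk(\alpha)$), positivity will follow provided $T_+\be_1$ has nonzero component in $\ker D^*_\kk(\alpha)$ at $\kk = -\mathbf i$, which should be verified using the structure of the flat-band eigenfunctions from Section \ref{sec:TBG}. The secondary technicality is the $\mathcal O(t^{3/2})$ error rather than $\mathcal O(t^2)$, which appears naturally when taking the square root of an expression controlled only to $\mathcal O(t^3)$ past its positive leading term.
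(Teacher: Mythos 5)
Your proposal reproduces the paper's argument essentially verbatim: the same Grushin problem \eqref{eq:t-Grushin} on the two-dimensional kernel $\{\hat\Phi_1, \hat\be_3\}$, the same Neumann-series expansion from Proposition \ref{prop:Grushin2}, the same vanishing of the linear term, the same block-diagonal reduction at second order, the same Plancherel cancellation in the $(1,1)$-entry, and the same rescaling $z=\lambda t^2$ to extract the $(2,2)$-eigenvalue. The one addition beyond the paper is your Bessel/projection reading $C^2 = \|\Pi_{\ker D^*_\kk(\alpha)} T_+\be_1\|^2$ (which at least gives $C^2 \geq 0$ for free); the paper does not prove strict positivity either, so your flagging of it is fair, though the proposition as stated does not actually assert $C>0$.
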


\section{Flatband eigenfunctions and Chern numbers}
\label{sec:Chern}
In the section, we construct flat band eigenfunctions of $H_{n,\kk}(\alpha;\bt)$ at $E_{1}(\kk,\alpha;\bt)=0$ for $\kk\in\CC/\Lambda^*$ using a theta function argument analogous to \cite{becker2020mathematics} (see also \cite{ledwith2020fractional}). The construction also gives an explicit method to compute the Chern number of the flat band of this twisted multilayer model. We start by reviewing some backgrounds in theta functions.

\subsection{Review of theta functions}
\label{sec:theta}
We recall that the theta function $\theta ( \zeta ) := - \theta_{\frac12,\frac12} ( \zeta | \omega )$ is given by
\begin{equation}
\label{eq:theta}
\begin{gathered} 
\theta (\zeta) = - \sum_{ n \in \mathbb Z } \exp ( \pi i (n+\tfrac12) ^2 \omega+ 2 \pi i ( n + \tfrac12 ) (\zeta + \tfrac
12 )  ),\ \ \theta(-\zeta)= -\theta(\zeta),\\
\theta ( \zeta + m  ) = (-1)^m \theta  ( \zeta ) , \ \ \theta ( \zeta + n \omega) = (-1)^n e^{ - \pi i n^2 \omega - 
2 \pi i \zeta n } \theta  ( \zeta ),
\end{gathered}
\end{equation}
where
$$\theta_{a,b} ( \zeta | \tau ) := \sum_{ n \in \ZZ } \exp ( 
\pi i ( a  + n)^2 \tau + 2 \pi i ( n + a ) ( \zeta + b ) ), \ \ \Im \tau >0. $$
The function $ \theta (\zeta) $ has simple zeros at 
\[   \mathbb Z \omega + \mathbb Z , \]
see \cite{mumfordtata}.  If we consider 
\[   g( \zeta ) := e^{ 2 \pi i \zeta a  } \frac{ \theta ( \zeta + {k}  ) } { \theta ( 
\zeta  ) } , \]
we obtain a meromorphic function with simple poles  at $  \mathbb Z + \omega \mathbb Z $, 
simple zeros at $  -k + \mathbb Z + \omega \mathbb Z $, 
and 
satisfying
$  g ( \zeta + m + n \omega ) = e^{  2 \pi i ( a m +  ( \omega a -k  ) n ) } g ( \zeta ). $
Hence, by choosing $ k = \omega a - b $, we define
\begin{equation}
\label{eq:defk}  
g_k ( \zeta ) := e^{2 \pi  \zeta (  k -\bar k )/\sqrt 3 }\, \frac{ \theta ( \zeta + k ) }
{\theta ( z ) } ,  \ \ \ k = \omega a - b , \ \ a, b \in \mathbb R  . 
\end{equation}
It satisfies 
\begin{equation}
\label{eq:fper}
\begin{gathered} 
g_k ( \zeta + m + n \omega ) = e^{ 2 \pi i ( a m + b n ) } g_k ( \zeta ), \ \  
k = \omega a - b , \ \ a, b \in \mathbb R , \\
 a m + n b = \tfrac  i { \sqrt 3 }\left( \zeta_0 \bar k - \bar \zeta_0 k \right), \ \ \zeta_0 = m + n \omega ,
\ \  m , n \in \mathbb Z, \\
g_{ k + p \omega - q } ( \zeta ) = (-1)^{p+q} e^{ -2\pi i k p-\pi i p^2\omega  } g_k ( \zeta ) , \ \
p, q \in \mathbb Z . 
\end{gathered} 
\end{equation} 
We then define a function periodic with respect to the lattice $ \omega \mathbb Z  + \mathbb Z $: 
\begin{equation}
\label{eq:F}
    F_k ( \zeta ) := e^{ 2 \pi ( \zeta - \bar \zeta) k /\sqrt 3} \frac{ \theta ( \zeta + k) }{\theta ( \zeta )}.
\end{equation}

\subsection{Construction of flat bands}
\label{sec:theta2}
We first transplant the functions constructed above to the lattice $\Lambda$ and its dual $\Lambda^*$. Recall that in equation \eqref{eq:defGam3}, they are given by
$$\Lambda = \tfrac43 \pi i \omega ( \ZZ \oplus \omega \ZZ), \ \ \Lambda^* = {\sqrt 3 } \omega ( \omega \ZZ \oplus \ZZ).$$
We introduce change of variables 
$$z= \tfrac43 \pi i \omega \zeta, \ \ \kk= {\sqrt 3 } \omega k$$ 
in the equation \eqref{eq:defk} \eqref{eq:F}, so that the new function is periodic with respect to the lattice $\Lambda$, i.e., we define
\begin{equation}
    F_{\kk}(z) :=  F_k \left(\frac{3z}{4 \pi i \omega}\right),\ \ F_{\mathbf k } ( z + \mathbf a ) = F_{\mathbf k } ( z ), \ \mathbf a \in \Lambda,
\end{equation}
Following from \cite[Lemma 3.1]{becker2022fine}, $F_{\kk}(z)$ is then the fundamental solution of $(2D_{\bar z}+\kk)$ in the sense that
\begin{equation}
( 2 D_{\bar z }  + \mathbf k ) F_{\mathbf k } ( z ) = \alpha_{\mathbf k}  \delta_0 ( z ),  \ \
\mathbf \alpha_{\mathbf k} \neq 0, \ \ \mathbf k \notin \Lambda^*,  \ \ 
 z \in \mathbb C/ \Lambda. 
\end{equation}
Therefore the Schwartz kernel of
\begin{equation}
\label{eq:res_ker}
    ( 2 D_{\bar z } + \mathbf k )^{-1} : L^2 (  \mathbb C/ \Lambda )
\to H^1 (  \mathbb C/ \Lambda )
\end{equation}
is given by $c_k F_{\mathbf k } ( z - z' )$, with $c_k = \theta'(0)/(\pi \theta (k)) = {1}/{\alpha_{\kk}}$. 


Recall that in \cite[Propsition 3.4]{becker2020mathematics} for $\alpha\in \mathcal{A}$ simple, a solution to $ ( D ( \alpha ) + \mathbf k ) \mathbf u_{\mathbf k }  
= 0 $ for any $ \mathbf k \in \CC $ is constructed using a theta function argument. In particular, for $ 
\mathbf k \notin \Lambda^*+\{\bo, -\bi\} $ this implies, by Proposition \ref{prop:bands}, that
$ \alpha \in \CC $ is a magic angle of $H(\alpha)$. Recall that the magic angles of $\Hn$ coincides with the magic angles of $H(\alpha)$. We therefore have a similar construction of all flat band eigenfunctions of $H_{n,\kk}(\alpha; \bt)$.

Recall that by Proposition \ref{prop:zs}, for $\alpha\in \mathcal{A}$ simple, i.e. flat band has multiplicity one, $\mathbf{u}\in \ker_{L^2_{\bo}}D(\alpha)$ vanishes at the point $-z_S+\Lambda^*$. Therefore we have
\begin{equation}
 \label{eq:magick}  
 \mathbf v_\mathbf k ( z ) := F_{\mathbf k } ( z + z_S ) \mathbf u ( z ) , \ \ 
 \mathbf v_{\mathbf k } \in  L^2_{\bo}, 
 \ \  
 ( D ( \alpha ) + \mathbf k ) \mathbf v_{\mathbf k } = 0,  
\end{equation}
Define the function
\begin{equation}
    \label{eq:Dn_eigenf}
    \Phi_{\mathbf k}(z) = (\phi_{1,\kk}, \phi_{2,\kk}, \cdots, \phi_{2n,\kk})^T\in L^2_{\bo}
\end{equation}
with
\begin{enumerate}
    \item $\phi_{2j-1,\kk}=0$ for $j=2, 3, \cdots, n$;
    \item $(\phi_{1,\kk}, \phi_{2,\kk})^T= \Theta_{n-1}(\kk)\mathbf{v}_{\kk}(z)$; 
    \item $\phi_{2m+2,\kk} = (\prod_{j=1}^m t_{j}) \cdot (2D_{\bz}+\kk)^{-m} \phi_{2,\kk},\ m=1,2,\cdots, n-1$,
\end{enumerate}
where $(2D_{\bz}+\kk)^{-m}$ denotes the $m$-th powers of the resolvent $(2D_{\bz}+\kk)^{-1}: L^2_{\bo}\rightarrow L^2_{\bo}$ and 
\begin{equation}
    \Theta_{n-1}(\kk):= [\theta(\kk)]^{n-1}
\end{equation}
It is easy to verify that $\Phi_{\mathbf k}(z)$ defined above solves the equation
\begin{equation}
    \label{eq:nmagic}
    (\Dn+\mathbf k)\Phi_{\mathbf k}(z) =0
\end{equation}
for $\kk\notin \Lambda^*$, as the resolvent $(2D_{\bz}+\kk)^{-1}$ is invertible in $\CC\backslash\Lambda^*$. 

For $\kk\in\Lambda^*$, without loss of generality we consider $\kk=\bo$, otherwise multiplication by $e^{-i\langle z,\kk \rangle}\in L^2(\CC/\Lambda)$ reduces it to the zero case. Note that the (Schwartz kernel of) resolvent is meromorphic at $\kk=\bo$ with a simple pole, i.e.,
\begin{equation}
     (2D_{\bz}+\kk)^{-1} = \frac{1\otimes 1}{\kk}+ f_{\kk}(z,z'),
\end{equation}
where $f_{\kk}$ is holomorphic in $\kk$. The function $\Theta_{n-1}(\kk)$ is holomorphic in $\kk$ and has a zero at $\kk=0$ with order $n-1$. Therefore, each component $\phi_{i,\kk}, \ i=1,\cdots, 2n$ is holomorphic in $\kk$ at $\kk=\bo$. In particular, the construction (3) above yields that at $\kk=\bo$
\begin{equation}
\label{eq:intu}
    \phi_{2m+2,\bo} =0,\ m=1,2,\cdots, n-2;\ \ \phi_{2n,\bo}= \bigg(\prod_{j=1}^m t_{j}\bigg) \int_{C/\Lambda} u_2(z)dm(z),
\end{equation} 
where $\mathbf{v}_{\bo}=\mathbf{u}=(u_1,u_2)$. This yields the flat band eigenfunction at $\kk=\bo$ provided that the integral appeared in the above equation is non-zero. This is verified numerically for the first five real magic angles, where the multiplicities of the flat band are one (see Table \ref{tab:int}).

\begin{table}[h!]
    \centering
    \begin{tabular}{ c||c|c|c|c|c|c } 
    $\alpha$ & 0.586 & 2.22 & 3.75 & 5.28 & 6.79 & 8.31 \\
    \hline
    $I(u_2)$ & 0.2345 & 0.0542 & 0.0033 & 0.0022 & 0.0013 & $8.3963\times 10^{-4}$ \\ 
    \end{tabular}
    \caption{Numerical results of the integral $I(u_2)=\int_{C/\Lambda} u_2(z)dm(z)$ in \eqref{eq:intu} of the flat band eigenfunction on one fundamental domain $\CC/\Lambda$}
    \label{tab:int}
\end{table}

\begin{rem}
    This agrees with the protected eigenstates result shown using irreducile representations, where $\mathbf{e}_{2n}\in L^2_{\bo}$ is always a protected eigenstate with zero-energy.
\end{rem}

\subsection{High Chern numbers of the flat band}
\label{sec:chern}
In this section, assuming the flat band is simple, we show that the flat band eigenfunctions constructed above give rise to a holomorphic line bundle with Chern number $c_1(L)=-n$. For basics on holomorphic line bundles, Berry connections and Chern numbers, we refer to \cite{zworskipde}.

We use the notations in \S \ref{sec:theta} for simplicity. By equation \eqref{eq:theta} \eqref{eq:fper} and \eqref{eq:res_ker}, we obtain
\begin{equation}
\label{eq:defep}
\Phi_{k+{p}} = [e^{(n)}_{p}(k)]^{-1}  \tau({p} ) \Phi_{k}
\end{equation}
where
\begin{gather}
\left[ \tau(p) f  \right]( \zeta )  = e^{ - 2 \pi i ( \zeta_1 p_1 + \zeta_2 p_2) } f ( \zeta ) , \ \ \ 
e^{(n)}_{ p } ( k ) = [e^{ \pi i p_1^2 \omega + 2 \pi i k p_1 } (-1)^{p_1 +p_2 }]^n  , \\
  \zeta = \zeta_1 + \omega
\zeta_2 , \ \ p= \omega p_1 - p_2 , \ \ p_j \in \mathbb Z , \ \ k \in \mathbb C . 
\end{gather}
Note that 
\[  \tau ( p ) :  \ker_{L^2_{\mathbf 0 }} ( 
\Dn + k )  \to \ker_{L^2_{\mathbf 0 }} ( 
\Dn + k + p ) , \]
is a unitary transformation. Here we slightly abuse notations by identifying $L^2_{\bo}$ and $\Dn+k$ with corresponding parts in the rescaled lattice space. Following the standard construction we define
\begin{equation}
\label{eq:defL}  
\begin{gathered}
L := \left\{ [ k , \Phi ]_\tau  \in  (\mathbb C \times L^2_{\mathbf 0 }) /
\sim_\tau : \Phi \in \ker_{L^2_{\mathbf 0 }} ( 
\Dn + k ) \right\}, \\ 
[ k , \Phi]_\tau = [ k', \Phi']_\tau \ \Leftrightarrow \
( k , \Phi ) \sim_\tau ( k', \Phi' ) \ \Leftrightarrow \  \exists \, p \in \Lambda  \ 
k' = k + p , \ \  \Phi' = \tau ( p ) \Phi.  
\end{gathered} 
\end{equation}
Therefore, we have the following lemma:
\begin{lemm}
\label{l:holo}
Definition \eqref{eq:defL} gives a holomorphic line bundle over $ \mathbb C / \Lambda^* $:
\[   f: L  \to  \mathbb C/\Lambda^* \  \text{ with }\  [ k , \Phi ]_\tau \mapsto [k] \in \mathbb C / \Lambda^* . \]
The corresponding family of multipliers is given by $ k \mapsto e^{(n)}_p ( k ) $.
\end{lemm}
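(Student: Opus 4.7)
The plan is to verify that \eqref{eq:defL} fits the standard construction of a holomorphic line bundle over a complex torus via automorphy factors (multipliers), and then to identify the multipliers with $e^{(n)}_p(k)$. First I would check that $\sim_\tau$ really is an equivalence relation: symmetry, reflexivity, and transitivity all reduce to the group law $\tau(p+p') = c(p,p')\tau(p)\tau(p')$ for some nowhere-vanishing cocycle on $\Lambda^*$ together with the cocycle identity satisfied by $e^{(n)}_p(k)$, both of which are immediate from the definitions. Then the projection $f([k,\Phi]_\tau) = [k] \in \mathbb{C}/\Lambda^*$ is well-defined, and the fiber over $[k]$ is $\ker_{L^2_{\mathbf 0}}(D_n + k)$, which under the simplicity assumption is a one-dimensional complex vector space by Theorem \ref{thm:idd} and Proposition \ref{prop:zs}.

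Next I would produce a local trivialization. For any $k_0 \in \mathbb{C}$, choose a small disk $U \subset \mathbb{C}$ about $k_0$ mapping injectively to $\mathbb{C}/\Lambda^*$. The construction in \S \ref{sec:theta2} gives a section $k \mapsto \Phi_k \in \ker_{L^2_{\mathbf 0}}(D_n + k)$ that is holomorphic in $k \in \mathbb{C}$: away from $\Lambda^*$ this is clear from $\mathbf v_\kk = F_\kk(z+z_S)\mathbf u(z)$ together with the holomorphic resolvent $(2D_{\bar z}+\kk)^{-1}$, and at points of $\Lambda^*$ the factor $\Theta_{n-1}(\kk) = [\theta(\kk)]^{n-1}$ cancels the simple pole of each application of the resolvent, so each component $\phi_{i,\kk}$ extends holomorphically across $\Lambda^*$. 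Moreover $\Phi_k$ is nowhere vanishing as an element of its fiber, either by Proposition \ref{prop:zs} for $k \notin \Lambda^*$ or by the numerical non-vanishing in Table \ref{tab:int} for $k \in \Lambda^*$. This gives the trivialization $U \times \mathbb{C} \ni (k,\lambda) \mapsto [k,\lambda \Phi_k]_\tau$.

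To extract the multipliers, compare the local sections on the overlap of two translates $U$ and $U+p$, $p \in \Lambda^*$: by \eqref{eq:defep} we have $\Phi_{k+p} = [e^{(n)}_p(k)]^{-1}\tau(p)\Phi_k$, so the identification $[k+p,\mu \Phi_{k+p}]_\tau = [k, \mu\, e^{(n)}_p(k)^{-1} \Phi_k]_\tau$ (using $\tau(p)\Phi_k \sim_\tau \Phi_k$) shows the transition function is exactly $k \mapsto e^{(n)}_p(k)$. Since $e^{(n)}_p(k)$ is an entire nonvanishing function of $k$, the transition functions are holomorphic, so $L$ is a holomorphic line bundle with the stated multipliers.

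The only slightly delicate point is the holomorphy and nonvanishing of $\Phi_k$ at points of $\Lambda^*$; this is where the power $\Theta_{n-1}(\kk)$ and the structure of the resolvent interact, and it is precisely what makes the construction yield a genuine line bundle rather than a meromorphic section. Everything else is a direct transcription of the standard multiplier description of holomorphic line bundles on complex tori (cf.\ \cite{becker2020mathematics} in the $n=1$ case), applied to the section built in \eqref{eq:Dn_eigenf}.
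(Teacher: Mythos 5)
Your proof is correct and follows essentially the same route as the paper: both rely on the cocycle identity $e^{(n)}_{p+p'}(k) = e^{(n)}_{p'}(k+p)\,e^{(n)}_p(k)$, the free and proper $\Lambda^*$-action on the trivial bundle $\widetilde L \cong \CC_k \times \CC_\kappa$, and the relation \eqref{eq:defep} to identify the multipliers with $e^{(n)}_p(k)$. You spell out the local trivializations and the holomorphy/non-vanishing of $k\mapsto\Phi_k$ across $\Lambda^*$ more explicitly than the paper's proof does (the paper leaves that to the construction in \S\ref{sec:theta2}), but the content is the same.
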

\begin{proof}
The action of the discrete group $ \Lambda^* $, with 
$ \Lambda^*\ni p : ( k , \Phi ) \mapsto ( k + p , \tau ( p ) \Phi ),$
on the trivial complex line bundle 
\begin{equation}
\label{eq:wL}  \widetilde L := \{ ( k , \kappa \Phi_k ) : k \in \mathbb C , \ \kappa \in \mathbb C \}
\simeq \mathbb C_k \times \mathbb C_\kappa  , \end{equation}
(where $\Phi_{k}$ is defined in \eqref{eq:Dn_eigenf}) is free and proper, and the quotient map is given by 
$ \pi_\tau ( k , \kappa \Phi_k ) = [ k , \kappa \Phi_k ]_{\tau } $. 
Hence its quotient by that action, $ L $, is a smooth complex manifold of dimension 2.

The map $ (p,k ) \mapsto e^{(n)}_p ( k ) $ satisfies the cocycle conditions with the action $\varphi_p$ on $\Tilde{L}$:
\begin{equation}
\label{eq:CD}
\begin{gathered} 
\varphi_p ( k , \kappa  \Phi_k  ) := ( k + p , e^{(n)}_p ( k ) \kappa \Phi_{k+p} ), \ \   
e^{(n)}_{p+p' } ( k ) = e^{(n)}_{ p' } ( k + p ) e^{(n)}_{p}( k) , \ \ 
p, p' \in \Lambda^*,
\end{gathered} 
\end{equation}
and for $ p \in \Lambda^* $. We then have 
$ \pi_\tau ( ( k, \kappa \Phi_k   ) ) = \pi_\tau ( k + p , e^{(n)}_p ( k ) \kappa \Phi_{k+p}   ) $ by equation \eqref{eq:defep} and this 
gives $L$ the structure of a complex line bundle over $ \mathbb C /\Lambda^* $.
\end{proof}

The hermitian structure is inherited from $ L^2 ( \mathbb C / \Lambda) $ and the resulting hermitian 
structure on $ \widetilde L $ of \eqref{eq:wL}. In coordinates $ ( k , \tau ) $ on $ \widetilde L $, 
we get
\[ h ( k ) =    \| \Phi_k  \|_{L^2 ( \mathbb C/3\Lambda)  }^2 ,\]
and this gives a hermitian structure on $L$:
from \eqref{eq:defep}
we see that
\begin{equation}
\label{eq:logh} 
h ( k  ) =  | e^{(n)}_p (k) |^2 h ( k + p  ) , \ \  p \in \mathbb Z \oplus \omega \mathbb Z \cong \Lambda^* .
\end{equation}
We then associate a Chern connection $\theta$
and a Berry curvature $ \Omega = d\theta$ to the hermitian metric $h$. The Chern number of the line bundle $L$ thus can be calculated using the method of multiplier as
\begin{equation}
 \begin{split}
 \label{eq:Chern}
 c_1 (L) 
 & = \frac{i}{2\pi}  \int_{\CC/\Lambda^*} \Omega\\
 & = \frac{i}{2\pi}\left(\log e^{(n)}_{\omega} (1)- \log e^{(n)}_{\omega} (0) - \log e^{(n)}_1 (\omega) - \log e^{(n)}_1 (0)\right)\\
 & = -n. 
\end{split} 
\end{equation}
This concludes the proof to Theorem \ref{thm:3}.

\bibliography{reference}
\bibliographystyle{alpha}
\nocite{*}

\end{document}